\newcommand\scalemath[2]{\scalebox{#1}{\mbox{\ensuremath{\displaystyle #2}}}}
\newtheorem{theorem}{Theorem}
\newtheorem{lemma}{Lemma}
\newtheorem{assumption}{Assumption}
\newtheorem{remark}{Remark}
\newtheorem{definition}{Definition}
\newcommand*{\hermtr}{{\mathsf{H}}}
\begin{document}

\title{Data-Driven Learning of Two-Stage Beamformers in Passive IRS-Assisted Systems with Inexact Oracles}

\author{Spyridon Pougkakiotis, Hassaan Hashmi,~\IEEEmembership{Student Member, IEEE}, Dionysis Kalogerias,~\IEEEmembership{Senior Member, IEEE \vspace{-6pt}}
\thanks{S. Pougkakiotis is with the Department of Mathematics, King's College London, UK (email: spyridon.pougkakiotis@kcl.ac.uk). H. Hashmi and D. Kalogerias are with the Department of Electrical and Computer Engineering, Yale University, New Haven, CT, USA (email: \{hassaan.hashmi,
dionysis.kalogerias\}@yale.edu).}
\thanks{This work is supported by the US NSF under Grant 2242215.}}






\maketitle

\begin{abstract}
We develop an efficient data-driven and model-free unsupervised learning algorithm for achieving fully passive intelligent reflective surface (IRS)-assisted optimal short/long-term beamforming in wireless communication networks. The proposed algorithm is based on a zeroth-order stochastic gradient ascent methodology, suitable for tackling two-stage stochastic nonconvex optimization problems with continuous uncertainty and unknown (or ``black-box'') terms present in the objective function, via the utilization of \textit{inexact evaluation oracles}.
We showcase that the algorithm can operate under realistic and general assumptions, and establish its convergence rate close to some stationary point of the associated two-stage (i.e., short/long-term) problem, particularly in cases where the second-stage (i.e., short-term) beamforming problem (e.g., transmit precoding) is solved inexactly using an arbitrary (inexact) oracle. 
The proposed algorithm is applicable on a wide variety of IRS-assisted optimal beamforming settings, while also being able to operate without (cascaded) channel model assumptions or knowledge of channel statistics, and over arbitrary IRS physical configurations; thus, no active sensing capability at the IRS(s) is needed.
%
%
Our algorithm is numerically demonstrated to be very effective in a range of experiments pertaining to a well-studied MISO downlink model, including scenarios demanding physical IRS tuning (e.g., directly through varactor capacitances), even in large-scale regimes.
\vspace{-3pt}
\end{abstract}

\begin{IEEEkeywords}
Intelligent reflecting surfaces (IRS/RIS), two-stage stochastic programming, zeroth-order optimization, model-free learning, beamforming, data-driven learning, stochastic optimization with inexact oracles
\end{IEEEkeywords}


\vspace{-6pt}
\section{Introduction}
\IEEEPARstart{W}{ireless} communication systems in the current era of massive information necessitate the development and deployment of efficient and reliable \emph{beamforming strategies}, as a means to achieving effective signal alignment and/or interference cancellation, for improving certain metrics of \emph{Quality of Service (QoS)}, such as rate, signal-to-noise ratio, decoding performance, or spectral efficiency. As a result, several novel technologies have been thoroughly investigated in relatively recent literature \cite{mimo:larsson2014massive, udn:kamel2016ultradense, udn:gotsis2016ultradense, 5g:andrews2014will, 5g:shafi20175g}, reinforcing and improving traditional beamforming techniques \cite{yoo2005optimality, wmmseShi2011}, which typically struggle with so-called \emph{non-line-of-sight} (non-LOS) losses (e.g., in the case of highly directional mmWaves \cite{mmWave_challenges_opportunities}), energy consumption, poor scalability, or increased latency. 
\par One popular technology aimed at scalably improving conventional beamforming approaches, and which has been vigorously investigated over the last few years, is that of \emph{Intelligent Reflecting Surfaces} (IRSs or RISs) \cite{irs_design:sievenpiper2003}. An IRS is a metasurface comprised of a planar array of reflecting elements, whose amplitude and phase can be tuned via changing certain physical parameters (or knobs). IRSs are purposed to minimize non-LOS losses without the need of resorting to ultra-dense networks (e.g., as in \cite{5g:udn_valenzuela2018ultra}), and with minimal overhead in terms of interference and energy consumption. Specifically, \emph{passive} IRS deployment (i.e., the deployment of IRSs with no active sensing onboard) is of paramount importance for circumnavigating some of the core challenges faced by conventional beamforming, while also being practically feasible. 
\par The potentially far-reaching benefits of enhancing standard beamforming schemes (often based on utility maximization) via the deployment of ultimately passive IRS devices at a very manageable overhead have resulted in the widespread study of such IRS-assisted wireless communication systems. In fact, there is a rich literature on IRS-aided optimal beamforming, which focuses on the development of methods and algorithms able to reliably yield high-quality (approximate) solutions to the associated optimization problems.


\par To that end, two general system models have been primarily explored in the relevant literature, resulting in methods of distinct character as far as beamforming optimization is concerned. The \textit{first model} treats IRS elements as short-term scale (i.e., reactive) beamformers, and their tuning/selection is performed simultaneously with standard beamforming (i.e., symbol transmit precoding). This model originally calls for the exploitation of elaborate channel models and detailed system structure (e.g., spatial network topology and IRS configurations), leading to the design of specialized solvers to find (near-)optimally tuned IRS elements, on the basis of instantaneous cascaded channel state information (CSI); see, e.g., \cite{reactive:wu2019intelligent, reactive:wu2019towards, reactive:wang2020intelligent, reactive:wu2020joint,sco:yang2021}. While generally effective in simulated environments, this approach entails highly unrealistic operational assumptions (especially in passive IRS setups) and would introduce substantial computational and resource overheads if deployed in a practical setting, in particular due to the need of very fine-grained, perpetual estimation of instantaneous CSI (also most often necessitating active on-IRS sensing), as well as resource-inefficient continuous IRS optimal control.

While subsequent approaches relying on either supervised learning (SL) \cite{ml:taha2021,ml:yang2021} or end-to-end deep reinforcement learning (DRL) \cite{drl:mismar2019dqn,drl:huang2020ddpg} promise to alleviate one or more of the aforementioned issues, they ultimately introduce new challenges. In fact, the utilization of elaborate function approximators ---e.g., deep neural networks, employed either for value function approximation, or as predictors---, on which both such approaches rely on, is well-known to often increase problem complexity (especially without expert domain knowledge), result to non-interpretable models, hinder robustness, frequently induce overfitting and lack of generalizability, and/or to being sensitive to hyperparameter tuning. Similar conclusions can be made in regard to another more recent line of work that develops methods employing function approximators as (IRS) beamforming policy parameterizations trained through ``unsupervised'' learning schemes \cite{group2:chen2023graph, group2:jiang2021learning, group2:li2023gnn, group2:lim2023graph, group2:lyu2024investigating, group2:mehrabian2024joint, group2:peng2024risnet,group2:song2020unsupervised,group2:yang2024graph, group2:yeh2024enhanced}; such methods are essentially model-based as well, since they rely on fine-grained cascaded CSI estimation, especially during training (despite the fact that inference can often be performed in a model-free fashion).

The \textit{second system model} investigated in the literature (and also in the rest of this paper) treats optimal tuning of the IRS elements in a long-term timeframe, while performing standard optimal beamforming (e.g., symbol precoding) in the short-term. While this is a significantly more realistic approach, it results in optimization problems which are natural instances of \emph{two-stage nonconvex stochastic programming}  (e.g., see \cite{sco:guo2020larsson,sco:zhao2020tts, sco:zhao2021qos, group1:cai2020reconfigurable, group1:cao2022two, group1:chen2022two, group1:chen2023irs, group1:gan2022multiple, group1:jin2023low, group1:wang2022two, group1:wang2023spatial, group1:wu2023two, group1:yang2022multi, group1:zhai2021beamforming, group1:zhang2023deep, group1:zhao2022secrecy, IEEE_tran_Hashmietal}), and particularly challenging to solve.
\par Up until recently, a state-of-the-art approach for successfully tackling the aforementioned nonconvex two-stage stochastic problems has relied on the utilization of \emph{Stochastic Successive Convex Approximation} (SSCA) techniques \cite{SSCO_1, SSCO_2}, as well as certain simplifying (e.g., linear) assumptions on the underlying cascaded IRS CSI model and overall interference pattern of the underlying communication system (e.g., \cite{sco:guo2020larsson,sco:zhao2020tts, sco:zhao2021qos, group1:wu2023two,group1:chen2022two, group1:chen2023irs, group1:jin2023low, group1:zhao2022secrecy, group1:cai2020reconfigurable}). However, except for being dependent on models as well as explicit cascaded (statistical) CSI estimation, SSCA-based techniques require the formulation of convex surrogates replacing the original long-term (i.e., first-stage) problems (see, e.g., the seminal work \cite{sco:zhao2020tts}), leading to suboptimal performance on their original non-convex counterparts, in settings as standard as weighted sumrate maximization. Likewise, while several other relevant approaches have been recently proposed \cite{group1:zhang2023deep, group1:cao2022two, group1:wang2022two, group1:gan2022multiple, group1:yang2022multi, group1:wang2023spatial, group1:zhai2021beamforming}, their applicability is also plagued by the need for cascaded, instantaneous or statistical, CSI estimation, again presupposing the imposition of explicit model structure. 
\par To obtain a realistic approach that simultaneously does not rely on CSI model assumptions and/or network/IRS configurations, does not involve function approximators, and does not introduce additional overheads (e.g., due to cascaded channel estimation and/or continual IRS control),  a new oracle-based, data-driven and model-free zeroth-order method was recently proposed by the authors \cite{zosga_conf,IEEE_tran_Hashmietal}, termed ZoSGA in \cite{zosga_conf,IEEE_tran_Hashmietal},  which is able to deal directly with general nonconvex two-stage stochastic programs, without the utilization of convex approximations. It was proven, under very reasonable and realistic assumptions, that ZoSGA is (ergodically) convergent in the vicinity of some critical point of the (nonconvex) two-stage stochastic program at hand \cite{IEEE_tran_Hashmietal}. At the same time, it was numerically demonstrated, over a wide range of experiments in the context of sumrate maximization, that ZoSGA substantially outperforms state-of-the-art SSCA-based alternatives, despite the latter utilizing full knowledge of an exact channel model, its associated statistics, and the particular network/IRS configuration (cf. \cite{zosga_conf,IEEE_tran_Hashmietal}). Overall, by tackling the original nonconvex problem, it was demonstrated that oracle-based zeroth-order methods can jointly learn short-term beamformers and optimized long-term IRS parameters that yield substantially higher QoS at much lower computational/resource costs, while being theoretically supported under grounded assumptions.
\begin{figure}
  \centering
  \vspace{-3bp}
  \centerline{\includegraphics[width=3.3in]{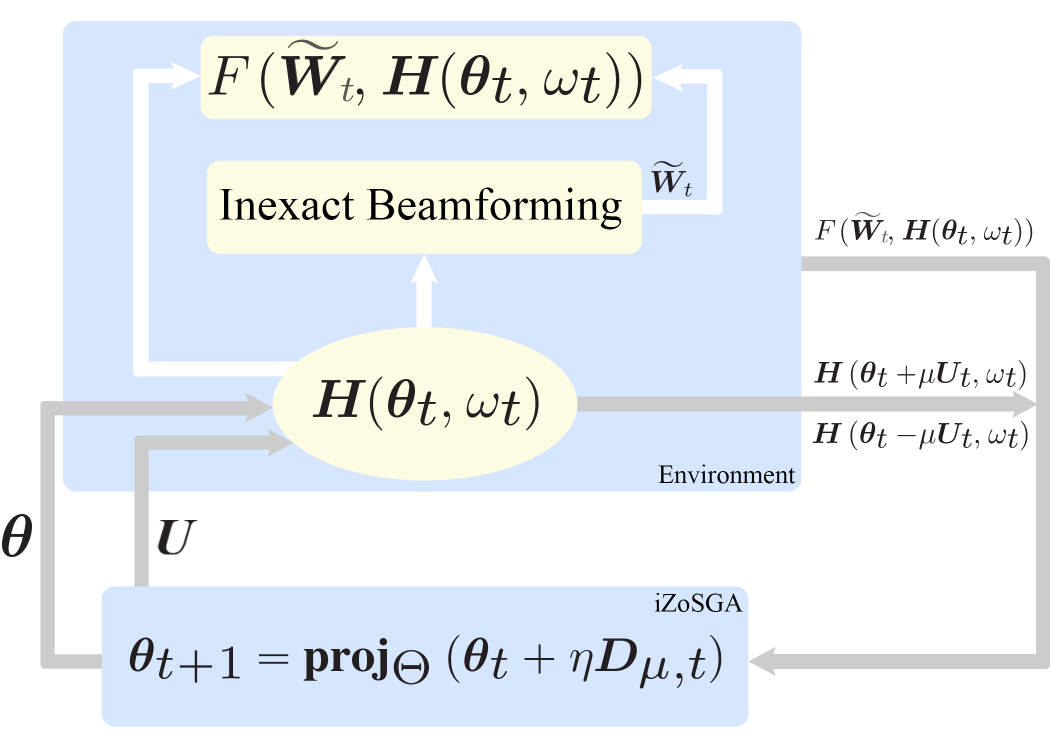}}
\vspace{-6bp}
\caption{Flowchart of the proposed algorithm, termed iZoSGA, which is the inexact extension of the so-called ZoSGA algorithm developed in \cite{IEEE_tran_Hashmietal}.}
\label{fig:graphic_izosga}
\vspace{-14pt}
\end{figure}

However, the study in \cite{IEEE_tran_Hashmietal}, which to our knowledge was the first of its kind, assumed access to a \emph{perfect oracle} (i.e., a ``black-box" returning a globally optimal solution to a given optimization problem) for short-term QoS maximization over feasible beamformers. 
While this assumption is significantly more general than typical alternative assumptions utilized in the relevant literature of bilevel (or two-stage) optimization (namely, the single-valuedness assumption of the second-stage optimal value function; e.g., see \cite{DanskinMinMax,Kwon_etal_ICML23}), it still creates a barrier between the theory and practice underlying the oracle-based zeroth-order algorithm proposed in \cite{IEEE_tran_Hashmietal}. Indeed, in the numerical experiments performed in \cite{IEEE_tran_Hashmietal}, the corresponding short-term precoding problems were approximately ``solved" by utilizing the well-known \emph{Weighted Minimum Mean Square Error} (WMMSE) algorithm \cite{wmmseShi2011} for a \textit{fixed} number of iterations, forgoing the exact oracle regime. Despite this discrepancy, the numerical results were extremely promising and demonstrated that ZoSGA in fact sets a new state of the art in the context of IRS-assisted optimal beamforming.

\par \textit{In this work}, we close this theory-practice gap, by introducing and developing an extension of the algorithm developed in \cite{IEEE_tran_Hashmietal}. This new algorithm, termed \textit{iZoSGA}, utilizes general \textit{inexact oracles}, which in fact might provide arbitrarily sub-optimal solutions to the underlying short-term beamforming problems; see also Fig. \ref{fig:graphic_izosga}. We analyze iZoSGA under mild assumptions, and establish its (ergodic) convergence rate in the vicinity of a stationary point of the original two-stage stochastic program, where the proximity to such a stationary point is now also directly controlled by the corresponding error of the inexact oracle (and where this error is measured in a certain mathematically precise sense; see Sections \ref{sec: conv anal}, \ref{sec: compatibility of assumptions}).
\par Our technical analysis reveals the oracle error propagation dynamics as the proposed algorithm iterates, initially without imposing any assumptions on the oracle itself, or its properties. Then, by specializing to a wide class of two-stage optimization problems that naturally appear in the context of short/long-term (passive) IRS-assisted optimal beamforming, we show that, in fact, the error of the inexact oracle can be controlled (to a reasonable extent). Consequently, we obtain a technical framework which explains the success and efficiency demonstrated in the numerical experiments presented in \cite{IEEE_tran_Hashmietal} in a reasonable and satisfying manner. 
We also gain valuable insghts on the ways the analysis presented herein can be used to inform algorithmic design \textit{after} taking oracle errors into account, which has the potential of enabling the design of specialized schemes of actual practical value. 

To further illuminate the findings above, we perform extensive numerical experiments using inexact oracles of varying quality. Our experiments empirically demonstrate the tolerance of iZoSGA relative to oracle inexactness, as well as showcase different algorithmic strategies that can be utilized to improve practical performance by adjusting the inexact oracle to the needs of the solver and of the associated optimization problem at hand. We illustrate that the algorithm can effectively handle various (even varying) levels of short-term inexactness during long-term optimization, in turn enabling tackling particularly large-scale instances of joint short/long-term IRS-assisted network utility maximization problems (in particular, with more than 38,000 links in our simulated environment). Finally, we verify that iZoSGA is truly agnostic to cascaded CSI models, network topologies and IRS configurations by considering a physical IRS model \cite{phy_em_model:costa2021electromagnetic}, successfully tuned \textit{on-the-fly}.
\par The paper is organized as follows. We present our problem formulation as well as  core technical results, as reported in \cite[Sections II, III]{IEEE_tran_Hashmietal}, in Section \ref{sec: Preliminaries}. Subsequently, in Section \ref{sec: conv anal}, we introduce iZoSGA and study its convergence properties at the presence of a general (arbitrary) inexact oracle. Then, 
in Section \ref{sec: compatibility of assumptions} we discuss the plausibility of controlling the oracle error, and study the effects of such an inexact oracle on the convergence rate of the proposed algorithm (as presented in Section \ref{sec: conv anal}). In Section \ref{sec: numerical results}, we present our extensive numerical results on a variety of large-scale problem settings, showcasing the robust behavior of the algorithm relative to the underlying oracle errors. Section \ref{sec: Conclusions} finally concludes the article.

\textit{Notation:} Hereafter, we use $\|\cdot\|$ to denote the standard Euclidean norm, defined as $\|\bm{x}\| \coloneqq \sqrt{\langle \bm{x}, \bm{x} \rangle}$ for $\bm{x} \in \mathbb{F}^n$, where $\mathbb{F}$ is a field (assuming that $\mathbb{F} = \mathbb{R}$ or $\mathbb{F} = \mathbb{C}$). In case of a complex vector we use the Hermitian inner product. In case of a matrix we consider the corresponding induced norm. Given a closed set $\mathcal{K}$, we define $\text{dist}(\bm{W},\mathcal{K}) \triangleq \inf_{\bm{Y} \in \mathcal{K}}\|\bm{Y} - \bm{W}\|$. We assume a complete base probability space $(\Xi,\mathscr{F},P)$, and use ``a.e." to denote ``$P$-almost every(where)". For $p \in [1,\infty)$, $\mathcal{Z}_p \equiv \mathcal{L}_p(\Xi,\mathscr{F},P;\mathbb{R})$ denotes the space of $\mathscr{F}$-measurable functions $\phi: \Xi \rightarrow \mathbb{R}$, such that $\int_{\Xi} \lvert \phi \rvert^p dP < \infty$. Given $f \colon \mathbb{R}^n \rightarrow \mathbb{R}$ and $\rho > 0$, we say that $f$ is $\rho$-weakly convex ($\rho$-weakly concave) if $f(\cdot) + \frac{\rho}{2}\|\cdot\|^2$ ($-f(\cdot) + \frac{\rho}{2}\|\cdot\|^2$) is convex.

\section{Problem Formulation and Preliminary Results} \label{sec: Preliminaries}
\subsection{Problem Formulation} \label{subsec: problem formulation}
\label{subsec: two-stage program}
Following the prior developments in \cite{zosga_conf,IEEE_tran_Hashmietal}, in this work we are interested in two-stage stochastic problems of the form
\begin{equation} \label{eqn: two-stage problem} \tag{2SP}
\begin{split}
    \boxed{
    \max_{\boldsymbol{\theta} \in \Theta}  \mathbb{E}\left\{\max_{\boldsymbol{W} \in \mathcal{W}}{F}\left(\bm{W},\boldsymbol{H}(\boldsymbol{\theta},\omega)\right) \right\},}
    \end{split}
\end{equation}
where $\mathcal{W}$ is a compact set of feasible dynamic (i.e., short-term) beamformers $\boldsymbol{W}$, and $\Theta\subset \mathbb{R}^S$ is a convex and compact set of feasible real-valued parameters that control the complex-valued IRS phase-shift elements (e.g., amplitudes and phases) present in the system. We let $F \colon \mathbb{C}^{M_U}\times \mathbb{C}^{M_U} \rightarrow \mathbb{R}$ be a known utility function depending on the beamforming vectors $\boldsymbol{W}$ and on the observable \textit{effective} or \textit{compound} random channel field 
 $\boldsymbol{H} \colon \mathbb{R}^S \times \Omega \rightarrow \mathbb{C}^{M_U}$, which in turn is a function of the IRS parameters $\boldsymbol{\theta}$, as well as a hidden random ``state of nature'' $\omega \colon \Xi \rightarrow \Omega$. The random vector $\omega$ represents anything that is unknown in regard to the channel $\boldsymbol{H}$, such as propagation or (compound) interference patterns, internal channel states, or in general the generation mechanism of underlying intermediate communication links. We also assume that the observable effective channels $\bm{H}(\cdot,\omega)$ have unknown dynamics, and that we are only allowed to evaluate them at specific IRS parameter values $\bm{\theta}\in \Theta$ via conventional channel estimation.
\par In what follows, we provide certain regularity assumptions on \eqref{eqn: two-stage problem} and subsequently present preliminary core technical results, as developed in \cite[Section II]{IEEE_tran_Hashmietal} (see Lemma \ref{lemma: gradient of compositional} and Theorem \ref{thm: gradient of f} below), enabling the development of the proposed optimization scheme, to be introduced shortly. 
For brevity of exposition, the associated proofs are omitted; the interested reader is referred to \cite{IEEE_tran_Hashmietal}.
\subsection{Assumptions}
\par \textit{Second-stage problem:} For any realization $\omega \in \Omega$, and any $\bm{\theta} \in \Theta$, the second-stage problem reads
\begin{equation} \label{eqn: second-stage problem} \tag{SSP}
\max_{\bm{W} \in \mathcal{W}} \,\, \left\{G(\bm{W},\bm{\theta},\omega)   \triangleq F\left(\bm{W},\bm{H}(\bm{\theta},\omega)\right)\right\}. 
\end{equation}
\noindent The second-stage problem \eqref{eqn: second-stage problem} is deterministic and is solved after the ---otherwise hidden--- state of nature $\omega$ is realized (i.e., for each instance of $\omega$).
\par \textit{First-stage problem:} The first-stage problem, which is an equivalent expression for \eqref{eqn: two-stage problem}, can be written as
\begin{equation} \label{eqn: first-stage problem} \tag{FSP}
\max_{\bm{\theta} \in \Theta} \,\, \{ f(\bm{\theta}) \triangleq \mathbb{E}\left\{F\left(\bm{W}^*(\bm{\theta},\omega),\bm{H}(\bm{\theta},\omega)\right) \right\}\} 
\end{equation}
\noindent where $f$ is defined with respect to any arbitrary $\bm{W}^*(\bm{\theta},\omega) \in \arg\max_{\bm{W} \in \mathcal{W}} F\left(\bm{W},\bm{H}(\bm{\theta},\omega)\right)$. In Assumption \ref{assumption: two-stage problem} that follows, we state certain regularity conditions enforced on \eqref{eqn: two-stage problem}. 
\begin{assumption} \label{assumption: two-stage problem}
The following conditions are in effect:
\begin{enumerate}[align=parleft,labelsep=0.75cm]
    \item[\textnormal{\textbf{(A1)}}] The function $F \colon \mathbb{C}^{M_U} \times \mathbb{C}^{M_U} \rightarrow \mathbb{R}$ is twice continuously (real) differentiable;
    \item[\textnormal{\textbf{(A2)}}] The sets $\Theta$ and $\mathcal{W}$ are compact, and $\Theta$ is also convex;
    \item[\textnormal{\textbf{(A3)}}] The function $\bm{H}(\cdot,\omega)$ is $B_H$-uniformly bounded on $\Theta$ and twice continuously differentiable on $\mathbb{R}^S$, for a.e. $\omega \in \Omega$. Moreover, there exist numbers $L_{H,0}$, $L_{H,1}$, such that $\bm{H}(\cdot,\omega)$ is $L_{H,0}$-Lipschitz continuous with $L_{H,1}$-Lipschitz gradients on $\Theta$ for a.e. $\omega \in \Omega$;
    \item[\textnormal{\textbf{(A4)}}] There exists a positive function $\widetilde{\rho}(\cdot) \in \mathcal{Z}_1$, such that the function $\max_{\bm{W} \in \mathcal{W}} F(\bm{W},\bm{H}(\cdot,\omega))$ is $\widetilde{\rho}(\omega)$-weakly concave on $\Theta$;
    \item[\textnormal{\textbf{(A5)}}]Independent and identically distributed (i.i.d.) samples from the law of $\omega$ are available (but hidden).
\end{enumerate}
\end{assumption}
\noindent The conditions in Assumption \ref{assumption: two-stage problem} are mild and met in a wide range of passive IRS-aided optimal beamforming optimization models appearing in the literature, and beyond. We also refer the reader to \cite[Remark 1, Section IV and Appendix B]{IEEE_tran_Hashmietal} for a detailed discussion showcasing the generality and applicability of these conditions.

\vspace{-4pt}
\subsection{Technical Results} \label{subsec: technical results}
\par Since ${F}(\bm{W},\cdot)$ takes a complex input, we utilize \emph{Wirtinger calculus} (see \cite{arXiv:Kreutz-Delgado}) to derive its gradient. The full representation of the \textit{compositional} gradient of ${F}(\bm{W},\bm{H}(\bm{\theta},\omega))$ realtive to $\boldsymbol{\theta}$ follows.
\begin{lemma} \label{lemma: gradient of compositional}
For every $\bm{\theta} \in \Theta$, $\bm{W} \in \mathcal{W}$ and a.e. $\omega\in \Omega$, the gradient of ${F}\left(\bm{W},\bm{H}(\bm{\theta},\omega)\right)$ with respect to $\bm{\theta}$ reads
\begin{align}
& \nabla_{\bm{\theta}} F\left(\bm{W},\bm{H}(\bm{\theta},\omega)\right) \label{eqn: Wirtinger gradient of the compositional objective} 
\\ &\quad =\ 2 \nabla_{\bm{\theta}} \Re\left(\bm{H}(\bm{\theta},\omega)\right)\Re\left(\frac{\partial^{\circ}}{\partial \bm{z}} F(\bm{W},\bm{z}) \bigg\vert_{\bm{z} = \bm{H}(\bm{\theta},\omega)} \right)^\top \nonumber
\\  & \qquad + 2 \nabla_{\bm{\theta}} \Im\left(\bm{H}(\bm{\theta},\omega)\right)\Re\left(j\frac{\partial^{\circ}}{\partial \bm{z}} F(\bm{W},\bm{z}) \bigg\vert_{\bm{z} = \bm{H}(\bm{\theta},\omega)} \right)^\top, \nonumber
\end{align}

\noindent where $\frac{\partial^{\circ}}{\partial \bm{z}}(\cdot)$ is the Wirtinger cogradient operator. Moreover, there exists a constant $B_F > 0$ such that
\[\left\|\frac{\partial^{\circ}}{\partial \bm{z}} F(\bm{W},\bm{z})\big\vert_{\bm{z} = \bm{H}(\bm{\theta},\omega)}\right\| \leq B_F.\]
\end{lemma}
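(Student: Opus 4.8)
The plan is to prove the two assertions separately: first derive the gradient expression \eqref{eqn: Wirtinger gradient of the compositional objective} via the Wirtinger chain rule, and then establish the uniform bound $B_F$ through a compactness argument.

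For the gradient formula, I would write $\bm{z} = \bm{H}(\bm{\theta},\omega)$ and decompose it into its real and imaginary parts, $\bm{z} = \Re(\bm{H}) + j\Im(\bm{H})$. Since $\bm{\theta}$ is real while $F$ is a real-valued function of the complex argument $\bm{z}$, I would treat $\bm{z}$ and $\bar{\bm{z}}$ as formally independent and apply the Wirtinger chain rule componentwise,
\[
\frac{\partial F}{\partial \theta_s} = \sum_k \left( \frac{\partial^{\circ}}{\partial z_k} F \, \frac{\partial z_k}{\partial \theta_s} + \frac{\partial^{\circ}}{\partial \bar{z}_k} F \, \frac{\partial \bar{z}_k}{\partial \theta_s} \right).
\]
The pivotal simplification is that real-valuedness of $F$ forces the conjugate-symmetry relation $\frac{\partial^{\circ}}{\partial \bar{z}_k} F = \overline{\frac{\partial^{\circ}}{\partial z_k} F}$. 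Substituting $\frac{\partial z_k}{\partial \theta_s} = \frac{\partial \Re(H_k)}{\partial \theta_s} + j\frac{\partial \Im(H_k)}{\partial \theta_s}$ together with its conjugate, and then regrouping terms, the conjugate pairs collapse through the elementary identities $w + \bar{w} = 2\Re(w)$ and $w - \bar{w} = 2j\Im(w)$. The final step converts the imaginary-part contribution into the $\Re(j\,\cdot)$ form of the statement by means of $\Re(jw) = -\Im(w)$; reassembling the components into Jacobian-times-vector notation yields the claimed matrix expression.

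For the uniform bound, I would invoke the regularity assumptions directly. By \textbf{(A1)}, $F$ is twice continuously differentiable, so its Wirtinger cogradient $\frac{\partial^{\circ}}{\partial \bm{z}} F(\bm{W},\bm{z})$ is continuous in $(\bm{W},\bm{z})$. By \textbf{(A2)} the set $\mathcal{W}$ is compact, and by \textbf{(A3)} we have $\|\bm{H}(\bm{\theta},\omega)\| \leq B_H$ uniformly on $\Theta$ for a.e. $\omega$, so the evaluation point $\bm{z} = \bm{H}(\bm{\theta},\omega)$ always lies in the fixed compact ball $\{\bm{z} : \|\bm{z}\| \leq B_H\}$. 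Restricting to the compact product $\mathcal{W} \times \{\bm{z} : \|\bm{z}\| \leq B_H\}$, the extreme value theorem guarantees that the continuous map $(\bm{W},\bm{z}) \mapsto \left\| \frac{\partial^{\circ}}{\partial \bm{z}} F(\bm{W},\bm{z}) \right\|$ attains a finite maximum, which I would take as $B_F$. This bound is uniform over $\bm{\theta}$, $\bm{W}$, and a.e. $\omega$, since all relevant evaluation points remain within this single fixed compact set.

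I expect the main obstacle to be the bookkeeping in the chain-rule computation rather than any conceptual difficulty, i.e., correctly tracking the conjugate cogradient terms and applying the sign conventions of the Wirtinger operator and of the identity $\Re(jw) = -\Im(w)$ consistently throughout. The boundedness claim is routine once the correct compact evaluation domain is identified.
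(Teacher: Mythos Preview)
Your proposal is correct and follows the natural route: the Wirtinger chain rule combined with the conjugate-symmetry identity $\frac{\partial^{\circ}}{\partial \bar z_k}F=\overline{\frac{\partial^{\circ}}{\partial z_k}F}$ for real-valued $F$ is exactly how one derives \eqref{eqn: Wirtinger gradient of the compositional objective}, and the compactness argument for $B_F$ via \textbf{(A1)}--\textbf{(A3)} is the standard one. The paper itself does not prove this lemma in-line but defers to \cite[Appendix A and Lemma 2]{IEEE_tran_Hashmietal}; your argument is precisely the kind of derivation one expects to find there, so there is nothing substantive to contrast.
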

\begin{proof}
   See \cite[Appendix A and Lemma 2]{IEEE_tran_Hashmietal}.
\end{proof}

\par Under Assumption \ref{assumption: two-stage problem}, the function $f$, as defined in \eqref{eqn: first-stage problem}, is, in fact, continuously differentiable and weakly concave on $\Theta$. This is established in the following theorem.
\begin{theorem} \label{thm: gradient of f}
     Let Assumption \textnormal{\ref{assumption: two-stage problem}} hold. Then, for any $\bm{\theta} \in \Theta$, the function $f$ is well-defined and differentiable, with
     \begin{equation} \label{eqn: gradient of f}
     \begin{split}
        \nabla_{\bm{\theta}} f(\bm{\theta}) =&\ \mathbb{E}\left\{\nabla_{\bm{\theta}} F\left(\bm{W},\bm{H}(\bm{\theta},\omega)\right)\big\vert_{\bm{W} = \bm{W}^*(\bm{\theta},\omega)}\right\},
        \end{split}
     \end{equation}
     \noindent for any $\bm{W}^*(\bm{\theta},\omega) \in \arg\max_{\bm{W} \in \mathcal{W}} F\left(\bm{W},\bm{H}(\bm{\theta},\omega)\right)$. Moreover, for a.e. $\omega \in \Omega$ there exists a constant $\widehat{\rho}(\omega) > 0$ such that the mapping $\theta \mapsto \max_{\bm{W} \in \mathcal{W}} F(\bm{W},\bm{H}(\bm{\theta},\omega))$ is $\widehat{\rho}(\omega)$-weakly convex on $\Theta$. If $\widehat{\rho}(\cdot) \in \mathcal{Z}_1$, $f$ is $\rho$-weakly concave on $\Theta$, where ${\rho} \triangleq \max\left\{\mathbb{E}\{\widetilde{\rho}(\omega)\},\mathbb{E}\{\widehat{\rho}(\omega)\}\right\}$.
\end{theorem}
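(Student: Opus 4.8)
The plan is to work throughout with the second-stage value function $g(\bm{\theta},\omega) \triangleq \max_{\bm{W}\in\mathcal{W}} F(\bm{W},\bm{H}(\bm{\theta},\omega))$, so that $f(\bm{\theta}) = \mathbb{E}\{g(\bm{\theta},\omega)\}$, and to exploit that $g(\cdot,\omega)$ is simultaneously weakly concave (given by (A4)) and weakly convex (the second assertion, to be shown), which is exactly what pins down its differentiability. I would therefore first establish the weak-convexity assertion, then combine it with (A4) to deduce pointwise differentiability of $g(\cdot,\omega)$ together with the envelope (Danskin-type) gradient formula, and finally pass to the expectation to obtain the differentiability and weak concavity of $f$.

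For the weak convexity, the key point is that for each fixed $\bm{W}\in\mathcal{W}$ the composite map $\bm{\theta}\mapsto F(\bm{W},\bm{H}(\bm{\theta},\omega))$ is $C^2$ on $\Theta$ by (A1) and (A3), with a Hessian I can bound uniformly over $\bm{W}\in\mathcal{W}$ and a.e. $\omega$. Working in the real representation of Lemma \ref{lemma: gradient of compositional}, the chain rule produces two types of terms: one contracting $\nabla^2_{\bm{z}}F$ with two copies of the Jacobian of $\bm{H}$, controlled by $L_{H,0}^2$ times a bound on $\nabla^2_{\bm{z}}F$ over the compact set $\mathcal{W}\times\{\|\bm{z}\|\le B_H\}$; and one contracting $\nabla_{\bm{z}}F$ with the Hessian of $\bm{H}$, controlled by $B_F L_{H,1}$ (using Lemma \ref{lemma: gradient of compositional} and (A3)). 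Since this bound is independent of $\bm{W}$, each composite is $\widehat{\rho}(\omega)$-weakly convex for a common modulus $\widehat{\rho}(\omega)>0$; as adding $\tfrac{\widehat{\rho}(\omega)}{2}\|\cdot\|^2$ and taking the pointwise supremum over the compact $\mathcal{W}$ preserves convexity, $g(\cdot,\omega)$ is $\widehat{\rho}(\omega)$-weakly convex, which is the second assertion.

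Combining this with (A4), $g(\cdot,\omega)$ is both $\widetilde{\rho}(\omega)$-weakly concave and $\widehat{\rho}(\omega)$-weakly convex; the resulting two-sided quadratic supports touching at each point squeeze the sub- and super-differentials to a single vector, so $g(\cdot,\omega)$ is differentiable on $\Theta$ for a.e. $\omega$. To identify the gradient, for any maximizer $\bm{W}^*(\bm{\theta},\omega)$ and any direction $\bm{d}$ I would use the envelope inequality $g(\bm{\theta}+t\bm{d},\omega)-g(\bm{\theta},\omega)\ge F(\bm{W}^*,\bm{H}(\bm{\theta}+t\bm{d},\omega))-F(\bm{W}^*,\bm{H}(\bm{\theta},\omega))$, valid because $\bm{W}^*$ is feasible (though generally suboptimal) at $\bm{\theta}+t\bm{d}$; dividing by $t>0$ and letting $t\downarrow 0$ yields $\langle\nabla g(\bm{\theta},\omega)-\nabla_{\bm{\theta}}F(\bm{W}^*,\bm{H}(\bm{\theta},\omega)),\bm{d}\rangle\ge 0$ for every $\bm{d}$, hence equality, independently of which maximizer is selected. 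To differentiate $f=\mathbb{E}\{g\}$ under the integral, I would invoke dominated convergence: by Lemma \ref{lemma: gradient of compositional} and (A3) the integrand $\nabla_{\bm{\theta}}F(\bm{W}^*,\bm{H}(\bm{\theta},\omega))$ is uniformly bounded by a constant depending only on $L_{H,0}$ and $B_F$, which supplies an integrable dominating function and gives both well-definedness of $f$ and \eqref{eqn: gradient of f}.

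Finally, for the weak concavity of $f$ I would push the pointwise quadratic inequalities through the expectation: since $-g(\cdot,\omega)+\tfrac{\widetilde{\rho}(\omega)}{2}\|\cdot\|^2$ is convex a.e. and $\widetilde{\rho}(\cdot)\in\mathcal{Z}_1$, linearity of the expectation makes $-f+\tfrac{\mathbb{E}\{\widetilde{\rho}(\omega)\}}{2}\|\cdot\|^2$ convex, i.e. $f$ is $\mathbb{E}\{\widetilde{\rho}(\omega)\}$-weakly concave; the symmetric argument under $\widehat{\rho}(\cdot)\in\mathcal{Z}_1$ gives $\mathbb{E}\{\widehat{\rho}(\omega)\}$-weak convexity, and taking $\rho\triangleq\max\{\mathbb{E}\{\widetilde{\rho}(\omega)\},\mathbb{E}\{\widehat{\rho}(\omega)\}\}$ provides a single modulus for which $f$ is $\rho$-weakly concave. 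I expect the crux to be the differentiability step rather than the bookkeeping: because the maximizers $\bm{W}^*(\bm{\theta},\omega)$ need not be unique, neither (A4) nor the smoothness of $F$ alone yields a single-valued gradient of $g$, so the whole argument hinges on the two-sided weak-convex/weak-concave sandwich to certify differentiability \emph{before} the envelope identification can be made, together with verifying that every Lipschitz and Hessian constant is genuinely uniform in $\bm{W}$ and a.e. in $\omega$ so that the supremum and the expectation interchange cleanly.
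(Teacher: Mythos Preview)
Your proposal is correct and complete. The paper itself does not give a self-contained proof of this theorem; it simply cites \cite[Lemma 3 and Theorem 4]{IEEE_tran_Hashmietal}, so there is no in-paper argument to compare against at the level of technique. That said, your outline is precisely the argument one expects those cited results to contain: (i) a uniform-in-$\bm{W}$ Hessian bound on $\bm{\theta}\mapsto F(\bm{W},\bm{H}(\bm{\theta},\omega))$ via (A1)--(A3), yielding $\widehat{\rho}(\omega)$-weak convexity of the pointwise supremum; (ii) the two-sided weak convex/weak concave squeeze to force differentiability of $g(\cdot,\omega)$ and the Danskin-type envelope identification of its gradient at \emph{any} maximizer; (iii) dominated convergence (with the constant Lipschitz bound furnished by Lemma \ref{lemma: gradient of compositional}) to interchange $\nabla_{\bm{\theta}}$ and $\mathbb{E}$; and (iv) pushing the pointwise quadratic inequalities through the expectation to obtain $\rho$-weak concavity of $f$. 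One small remark: your Hessian bound in step (i) is actually uniform in $\omega$ as well (all constants $L_{H,0},L_{H,1},B_F,B_H$ are deterministic), so your $\widehat{\rho}(\omega)$ can be taken constant and the hypothesis $\widehat{\rho}(\cdot)\in\mathcal{Z}_1$ is automatic---this is consistent with, and slightly sharper than, the theorem as stated.
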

\begin{proof}
\par Follows from \cite[Lemma 3 and Theorem 4]{IEEE_tran_Hashmietal}.
\end{proof}
\begin{remark}
    \par Hereafter, a mild condition on integrability of the weak convexity random parameter $\widehat{\rho}(\omega)$ is implicitly assumed to hold, i.e., $\widehat{\rho}(\cdot) \in \mathcal{Z}_1$.
\end{remark}

\subsection{Compound Channel Zeroth-order Gradient Approximation}

\par The channel field $\bm{H}(\cdot,\omega)$ is assumed to have unknown dynamics. Therefore, the proposed method will rely on gradient estimates arising from a two-point stochastic evaluation of $\bm{H}(\cdot,\omega)$ (e.g., see \cite{IEEE_tran_Hashmietal,SIAMOPT:KalogeriasPowellZerothOrder,CompMath:Nesterov_etal,arXiv:Pougk-Kal}). From \textbf{(A3)}, we have
\[\nabla_{\bm{\theta}} \bm{H}(\bm{\theta},\omega) = \nabla_{\bm{\theta}} \Re\left(\bm{H}(\bm{\theta},\omega)\right) + j \nabla_{\bm{\theta}} \Im\left(\bm{H}(\bm{\theta},\omega) \right).\]
\noindent We approximate the gradient using only function evaluations of $\bm{H}(\cdot,\omega)$. We let $\bm{U} \sim \mathcal{N}\left(\bm{0}, \bm{I}\right)$ be a normal random vector, where $\bm{I}$ is the identity matrix of size $S$. Given a smoothing parameter $\mu > 0$, we consider the following gradient estimate
\begin{align}\label{eqn: ZO gradient approximation}
& \nabla_{\bm{\theta}} \bm{H}_{\mu}(\bm{\theta},\omega) \\
&\ \  \triangleq \frac{1}{2\mu}\mathbb{E}\left\{\left(\bm{H}\left(\bm{\theta} \hspace{-1pt}+\hspace{-1pt} \mu \bm{U},\omega\right) \hspace{-1pt}-\hspace{-1pt} \bm{H}\left(\bm{\theta}-\mu \bm{U},\omega\right)\right)\bm{U}^\top \big\vert \ \omega \right\}^\top. \nonumber
\end{align}
\noindent Let us define the quantities $\nabla_{\bm{\theta}} \bm{H}^R_{\mu}(\bm{\theta},\omega) \triangleq  \Re\left(\nabla_{\bm{\theta}} \bm{H}_{\mu}(\bm{\theta},\omega)\right)$ and $\nabla_{\bm{\theta}} \bm{H}^I_{\mu}(\bm{\theta},\omega) \triangleq  \Im\left(\nabla_{\bm{\theta}} \bm{H}_{\mu}(\bm{\theta},\omega)\right)$. From \textbf{(A3)}, \eqref{eqn: ZO gradient approximation} is well-defined and valid for points $\bm{\theta}$ lying on the boundary of $\Theta$. 
\par The \textit{zeroth-order gradient approximation} of $\boldsymbol{H}(\cdot, \omega)$ constructed in \eqref{eqn: ZO gradient approximation}
has multiple operational benefits. Firstly, it enables us to bypass any modelling assumptions about the communication channel, which typically incur modelling errors, especially in large-scale systems. Additionally, \eqref{eqn: ZO gradient approximation} enables (approximate) gradient evaluation using \textit{any} real-valued isomorphism on the complex plane, e.g.,  polar coordinates. In other words, we can readily assume that the vector $\bm{\theta}$ represents any real-valued parameters determining the complex-valued phase-shift elements of the IRSs (e.g., amplitutes and phases). In turn, this allows us to completely bypass the typical nonconvex unit-modulus constraints that arise when optimizing over complex IRS phase-shifts.
\par Using \eqref{eqn: ZO gradient approximation}, we may lastly construct a zeroth-order gradient approximation of \eqref{eqn: Wirtinger gradient of the compositional objective} reading
\begin{align} \label{eqn: ZO gradient of the compositional function}
  & \nabla_{\bm{\theta}}^{\mu} F\left(\bm{W},\bm{H}(\bm{\theta},\omega)\right) \\
  &\quad \triangleq \ 2 \nabla_{\bm{\theta}} \bm{H}_{\mu}^R(\bm{\theta},\omega)\left(\Re\left(\frac{\partial^{\circ}}{\partial \bm{z}} F(\bm{W},\bm{z}) \big\vert_{\bm{z} = \bm{H}(\bm{\theta},\omega)} \right)\right)^\top \nonumber \\ & \qquad + 2\nabla_{\bm{\theta}}  \bm{H}_{\mu}^I(\bm{\theta},\omega)\left(\Re\left(j\frac{\partial^{\circ}}{\partial \bm{z}} F(\bm{W},\bm{z}) \big\vert_{\bm{z} = \bm{H}(\bm{\theta},\omega)} \right)\right)^\top, \nonumber
\end{align}
\noindent where $ \nabla_{\bm{\theta}}\bm{H}_{\mu}^R(\bm{\theta},\omega)$ and $ \nabla_{\bm{\theta}}\bm{H}_{\mu}^I(\bm{\theta},\omega)$ are defined after \eqref{eqn: ZO gradient approximation}. 
\section{Algorithm and Convergence Analysis} \label{sec: conv anal}

\par In this section, we develop our proposed zeroth-order projected stochastic gradient ascend method for tackling \eqref{eqn: first-stage problem}. To this end, we define the notion of an \emph{inexact} oracle able to find points that are ``close" (in a certain sense) to an arbitrary optimal solution of the deterministic problem \eqref{eqn: second-stage problem}. 
\begin{definition} \label{definition: oracle assumption}
Let Assumption \textnormal{\ref{assumption: two-stage problem}} hold for problem \textnormal{\eqref{eqn: two-stage problem}}. Given any $\bm{\theta} \in \Theta$ and for a.e. $\omega \in \Omega$, an \emph{inexact oracle} exists that returns an approximate solution $\widetilde{\bm{W}} \in \mathcal{W}$ to \textnormal{\eqref{eqn: second-stage problem}}, such that
\begin{equation} \label{eqn: solution distance error} \mathrm{dist}\Big(\widetilde{\bm{W}},\arg\max_{\bm{W} \in \mathcal{W}} F(\bm{W},\bm{H}(\bm{\theta},\omega))\Big) \leq \varepsilon(\bm{\theta},\omega),
\end{equation}
\noindent with $\varepsilon(\cdot,\cdot)$ some measurable random error function.
\end{definition}
\par Let us observe that Definition \ref{definition: oracle assumption} does not specify the magnitude of the random error function, but only its measurability. Indeed, Definition \ref{definition: oracle assumption} introduces the notion of an inexact oracle with output $\widetilde{\bm{W}}$, in the sense that there exists $\bm{W}^* \in \arg\max_{\bm{W} \in \mathcal{W}} F(\bm{W},\bm{H}(\bm{\theta},\omega))$, such that $\|\widetilde{\bm{W}} - \bm{W}^*\| \leq \varepsilon(\bm{\theta},\omega)$, without specifying anything else about $\varepsilon(\cdot)$. This is intentional, since it allows for great flexibility when looking for approximate solutions to the second-stage problem \eqref{eqn: second-stage problem}. For example, one might attempt to find an inexact solution to \eqref{eqn: second-stage problem} by utilizing some iterative algorithm (such as the well-known WMMSE algorithm \cite{wmmseShi2011}, in case of sumrate maximization); then, $\varepsilon$ would be completely determined by the properties of the sequences generated by that iterative scheme.

\par Next, we will develop and analyze our proposed algorithm by exploiting such a general inexact oracle, and showcase how the (averaged-over-the-iterates expected) random oracle errors propagate in long-term IRS optimization, when using the oracle's produced inexact inner (i.e., short-term) solutions $\widetilde{\bm{W}}$. Subsequently, in Section \ref{sec: compatibility of assumptions}, we will show that for a wide class of problems of interest (completely compatible with our core conditions given in Assumption \ref{assumption: two-stage problem}, and with the application of interest), this error function can be expected to have a controlled magnitude (despite the potential non-concavity of the second-stage problem \eqref{eqn: second-stage problem}). 

\vspace{-1pt}\subsection{A Zeroth-order Projected Stochastic Gradient Ascent}
\par We now present the proposed (model-free) zeroth-order projected stochastic gradient ascent that uses inexact oracles for the second stage problem. From \textbf{(A5)}, we have available i.i.d. samples following the law of $\omega$. Thus, we may sample \eqref{eqn: ZO gradient of the compositional function} at every $(\bm{\theta},\bm{W}) \in \Theta\times\mathcal{W}$ and for a.e. $\omega \in \Omega$, to obtain the zeroth-order stochastic gradient approximation
\begin{align} \label{eqn: ZO gradient sample of the compositional function}
 & \bm{D}_{\mu}(\bm{\theta},\omega,\bm{U};\bm{W})  \triangleq \bm{\Delta}_{\mu}^R \left(\Re\left(\frac{\partial^{\circ}}{\partial \bm{z}} F(\bm{W},\bm{z}) \bigg\vert_{\bm{z} = \bm{H}(\bm{\theta},\omega)} \right)\right)^\top \nonumber\\ & \qquad   + \bm{\Delta}_{\mu}^I \left(\Re\left(j\frac{\partial^{\circ}}{\partial \bm{z}} F(\bm{W},\bm{z}) \bigg\vert_{\bm{z} = \bm{H}(\bm{\theta},\omega)} \right)\right)^\top,
\end{align}
\begin{align}\nonumber
    &\hspace{-8pt}\left(\bm{\Delta}_{\mu}^R, \bm{\Delta}_{\mu}^I \right)  
    \\ 
    &\triangleq\frac{1}{\mu}\Big[\left(\Re\left(\bm{\Delta}_{\mu}(\bm{\theta},\omega,\bm{U})\right) \bm{U}^\top \right)^\top \, \left(\Im\left(\bm{\Delta}_{\mu}(\bm{\theta},\omega,\bm{U})\right) \bm{U}^\top \right)^\top\Big], \nonumber
\end{align}
\noindent where $\mu > 0$ is the smoothing parameter, $\bm{U} \sim \mathcal{N}(\bm{0},\bm{I})$ and $\bm{\Delta}_{\mu}(\bm{\theta},\omega,\bm{U}) \triangleq \bm{H}\left(\bm{\theta} + \mu \bm{U},\omega\right) - \bm{H}\left(\bm{\theta}-\mu \bm{U},\omega\right)$. This sample gradient is obtained by probing the wireless network twice on perturbations of IRS parameters $\bm{\theta} + \mu \bm{U}$ and $\bm{\theta} - \mu \bm{U}$. This induces an overhead of two channel estimations, implicitly assumed to be within the coherence time of $\bm{H}$. Such an assumption is minimal and standard in the related literature on resource allocation in wireless systems; see, e.g., \cite{IEEE_tran_Hashmietal,2019eisen_pfo}.

\par The proposed algorithm evaluates \eqref{eqn: ZO gradient sample of the compositional function} at $\boldsymbol{W}=\widetilde{\boldsymbol{W}}_t$, and uses the resulting expression as an ascend direction in a projected quasi-gradient step. The scheme is summarized in Algorithm \ref{Algorithm: iZoSGA}; see also Fig. \ref{fig:graphic_izosga} for a concept diagram. We should also note that evaluating \eqref{eqn: ZO gradient sample of the compositional function} (at $\boldsymbol{W}=\widetilde{\boldsymbol{W}}_t$) does not correspond to a zeroth-order stochastic gradient representation of any surrogate function. Instead, this is a crude approximation of the gradient of the second-stage max-function appearing in \eqref{eqn: second-stage problem}, arising from the utilization of Gaussian smoothing, at the presence of an inexact oracle.

\par Overall, as also explained in \cite{IEEE_tran_Hashmietal}, the proposed algorithm requires exactly three effective channel estimations, per operational iteration: one to communicate (on which the short-term precoders $\widetilde{\boldsymbol{W}}$ are calculated), and two more pertaining to the system probes required for the sample gradient approximations. Further, we emphasize that channel estimation can be performed using any conventional scheme, since it is independent of the presence of IRSs in the system.
\renewcommand{\thealgorithm}{iZoSGA}
\begin{algorithm}[!t]
\caption{}
    \label{Algorithm: iZoSGA}
\begin{algorithmic}
\State \textbf{Input:}  $\bm{\theta}_0 \in \Theta$, $\eta > 0$, $\mu > 0$, $\varepsilon(\cdot) > 0$, and $T > 0$.
\For {($t = 0,1,2,\ldots, T$)}
\State Sample (i.i.d.) $\omega_t \in \Omega$, $\bm{U}_t \sim \mathcal{N}\left(\bm{0},\bm{I}\right)$.
\State Find $\widetilde{\bm{W}}_t \in \mathcal{W}$ such that
\[ \text{dist}\left(\widetilde{\bm{W}}_t,\arg\max_{\bm{W} \in \mathcal{W}} F(\bm{W},\bm{H}(\bm{\theta}_t,\omega_t))\right) \leq \varepsilon(\bm{\theta}_t,\omega_t).\]
\State Set $\bm{D}_{\mu,t} \equiv \bm{D}_{\mu}\left(\bm{\theta}_t,\omega_t,\bm{U}_t;\widetilde{\bm{W}}_t\right)$ as in \eqref{eqn: ZO gradient sample of the compositional function}.
\State  $\bm{\theta}_{t+1} = \textbf{proj}_{\Theta}\left(\bm{\theta}_t + \eta \bm{D}_{\mu,t}\right). $
\EndFor
\State Sample $t^* \in \{0,\ldots,T\}$ according to $\mathbb{P}(t^* = t) = \frac{1}{T+1}$.
\State \Return $\bm{\theta}_{t^*}$.
\end{algorithmic}
\end{algorithm}

\subsection{Convergence Analysis}
\par We proceed by establishing the convergence of \ref{Algorithm: iZoSGA}. In order to perform our analysis, we introduce some additional notation for convenience. Specifically, given $\bm{\theta} \in \mathbb{R}^S$ and $\bm{W}(\omega) \in \mathcal{W}$, let
\[ f_{\bm{W}}(\bm{\theta}) \triangleq \mathbb{E}\left\{ F(\bm{W},\bm{H}(\bm{\theta},\omega))\right\}.\]
\begin{lemma} \label{lemma: auxiliary function properties}
    Let Assumption \textnormal{\ref{assumption: two-stage problem}} hold. There exists a constant $L_{f,1} > 0$ such that for any $\bm{\theta} \in \Theta$ and any two random vectors $\bm{W}_1(\omega), \bm{W}_2(\omega) \in \mathcal{W}$, it holds that
    \begin{equation} \label{eqn: cross Lipschitz smooth} \|\nabla_{\bm{\theta}} f_{\bm{W}_1}(\bm{\theta}) - \nabla_{\bm{\theta}} f_{\bm{W}_2}(\bm{\theta})\| \leq L_{f,1} \mathbb{E}\left\{\|\bm{W}_1 - \bm{W}_2\|\right\}.
    \end{equation}
\end{lemma}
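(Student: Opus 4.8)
The plan is to reduce the claim to a pointwise (in $\omega$) Lipschitz estimate on the integrand and then pass to the expectation. First I would justify differentiating under the expectation, writing $\nabla_{\bm{\theta}} f_{\bm{W}}(\bm{\theta}) = \mathbb{E}\{\nabla_{\bm{\theta}} F(\bm{W},\bm{H}(\bm{\theta},\omega))\}$. This is a routine dominated-convergence / Leibniz argument, and is in fact cleaner than the corresponding step in Theorem \ref{thm: gradient of f}, because here $\bm{W} = \bm{W}(\omega)$ is held fixed with respect to $\bm{\theta}$ (so no Danskin-type subtlety arises), and the integrand together with its $\bm{\theta}$-gradient are uniformly bounded by Lemma \ref{lemma: gradient of compositional} and \textbf{(A3)}. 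I would then invoke the explicit compositional gradient formula \eqref{eqn: Wirtinger gradient of the compositional objective}. The key structural observation is that the Jacobian factors $\nabla_{\bm{\theta}} \Re(\bm{H}(\bm{\theta},\omega))$ and $\nabla_{\bm{\theta}} \Im(\bm{H}(\bm{\theta},\omega))$ do not depend on $\bm{W}$; hence the difference $\nabla_{\bm{\theta}} F(\bm{W}_1,\bm{H}) - \nabla_{\bm{\theta}} F(\bm{W}_2,\bm{H})$ collapses to these fixed Jacobians multiplying the differences of the Wirtinger cogradient terms $\Re\big(\frac{\partial^{\circ}}{\partial \bm{z}} F(\bm{W}_i,\bm{z})\big)$ and $\Re\big(j\frac{\partial^{\circ}}{\partial \bm{z}} F(\bm{W}_i,\bm{z})\big)$ evaluated at $\bm{z} = \bm{H}(\bm{\theta},\omega)$.

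I would then bound each factor separately. From \textbf{(A3)}, the $L_{H,0}$-Lipschitz continuity of $\bm{H}(\cdot,\omega)$ gives $\|\nabla_{\bm{\theta}} \Re(\bm{H})\|, \|\nabla_{\bm{\theta}} \Im(\bm{H})\| \le L_{H,0}$ a.e. For the cogradient differences, I would use that $F$ is $C^2$ by \textbf{(A1)}, so $\frac{\partial^{\circ}}{\partial \bm{z}} F(\cdot,\bm{z})$ is $C^1$ and therefore Lipschitz in $\bm{W}$, with a constant $L_W$ that can be taken \emph{uniform} over the compact set $\mathcal{W} \times \{\bm{z} : \|\bm{z}\| \le B_H\}$. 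Combining these estimates (using $|\Re(c)| \le |c|$ and $|\Re(jc)| = |\Im(c)| \le |c|$, and operator-norm submultiplicativity) yields the pointwise bound $\|\nabla_{\bm{\theta}} F(\bm{W}_1,\bm{H}) - \nabla_{\bm{\theta}} F(\bm{W}_2,\bm{H})\| \le 4 L_{H,0} L_W \|\bm{W}_1 - \bm{W}_2\|$ for a.e. $\omega$. Finally, taking expectations and using $\|\mathbb{E}\{\cdot\}\| \le \mathbb{E}\{\|\cdot\|\}$ delivers the claim with $L_{f,1} = 4 L_{H,0} L_W$.

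I expect the only genuinely delicate point to be the uniform-in-$\bm{z}$ Lipschitz constant $L_W$ for the Wirtinger cogradient: one must verify that the pair $(\bm{W},\bm{z})$ always ranges within a single compact set independent of $\bm{\theta}$ and $\omega$, which is exactly what the compactness of $\mathcal{W}$ in \textbf{(A2)} together with the $B_H$-uniform bound on $\bm{H}$ in \textbf{(A3)} furnish. This is what upgrades the mere continuity of the second-order real derivatives of $F$ to a global Lipschitz estimate on $\mathcal{W}$ (via the mean value inequality on a convex compact enlargement of $\mathcal{W}$). Everything else is bookkeeping with the triangle inequality.
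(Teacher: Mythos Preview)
Your proposal is correct and follows essentially the same approach as the paper: interchange differentiation and expectation (the paper cites \cite[Theorem 7.44]{SIAM:Shapiro_etal}), invoke the compositional gradient formula from Lemma~\ref{lemma: gradient of compositional}, use \textbf{(A1)} together with compactness of $\mathcal{W}$ (and the uniform bound $B_H$ on $\bm{H}$) to get a uniform Lipschitz constant for the Wirtinger cogradient in $\bm{W}$, and finish with Jensen's inequality. Your version is more explicit about the constants and the compactness argument for the $(\bm{W},\bm{z})$-domain, but the underlying argument is the same.
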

\begin{proof}
\par We first note that $\nabla_{\bm{\theta}} f_{\bm{W}}(\bm{\theta}) = \mathbb{E}\left\{\nabla_{\bm{\theta}} F(\bm{W},\bm{H}(\bm{\theta},\omega)\right\}$ (from \cite[Theorem 7.44]{SIAM:Shapiro_etal}). Then, from Lemma \ref{lemma: gradient of compositional}, we obtain the expression for  $\nabla_{\bm{\theta}} F(\bm{W},\bm{H}(\bm{\theta},\omega))$, for any $\bm{\theta} \in \Theta$, and any $\bm{W}(\omega) \in \mathcal{W}$, where $\omega \in \Omega$ (see \eqref{eqn: Wirtinger gradient of the compositional objective}). By substituting this gradient expression in the left side of \eqref{eqn: cross Lipschitz smooth}, we observe that the statement follows from \textbf{(A1)} of Assumption \ref{assumption: two-stage problem}. Indeed, from \textbf{(A1)} we obtain that $F(\cdot,\cdot)$ is twice (real) continuously differentiable. Hence, the Wirtinger cogradient appearing in \eqref{eqn: Wirtinger gradient of the compositional objective} is Lipschitz smooth with respect to $\bm{W}$, since $\mathcal{W}$ is compact, and the definition of the Wirtinger cogradient depends on the real gradients of $F(\bm{W},\cdot)$ (see \cite[Section 4.2]{arXiv:Kreutz-Delgado}). The proof then follows by a simple application of Jensen's inequality.
\end{proof}
\begin{lemma} \label{lemma: bound on sample gradient variance}
Let Assumption  \textnormal{\ref{assumption: two-stage problem}} hold, and fix any $\bm{\theta} \in \Theta$ and any random $\bm{W}(\omega) \in \mathcal{W}$. For a.e. $\omega \in \Omega$, any $\mu \geq 0$, and any $\bm{U} \sim \mathcal{N}(\bm{0},\bm{I})$, the following hold:
\begin{equation} \label{eqn: bound on the expected gradient}
\mathbb{E}\left\{\left\| \bm{D}_{\mu}(\bm{\theta},\omega,\bm{U};\bm{W})\right\|^2\right\} \leq  4B_F^2 L_{H,0}^2(S^2 + 2S),
\end{equation}
\begin{equation*}
\scalebox{0.96}{$\begin{split}
\mathbb{E}\left\{\bm{D}_{\mu}(\bm{\theta},\omega,\bm{U};\bm{W})\right] \triangleq   \mathbb{E}\left\{\nabla_{\bm{\theta}}^{\mu} F\left(\bm{W},\bm{H}(\bm{\theta},\omega)\right) \right\} \triangleq \widehat{\nabla} f_{\bm{W}}(\bm{\theta}),
\end{split}$}
\end{equation*}
\noindent and 
\begin{equation*}
\begin{split}
 \left\| \widehat{\nabla}f_{\bm{W}}(\bm{\theta}) - \nabla f_{\bm{W}}(\bm{\theta}) \right\| \leq  2 \mu B_F L_{H,1} \sqrt{M S},
\end{split}
\end{equation*}
\noindent for $\bm{U}$ and $\omega$ being statistically independent.
\end{lemma}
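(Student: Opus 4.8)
The statement bundles three claims which I would establish in sequence, since they largely decouple: a second-moment bound on the sample direction $\bm{D}_\mu$, the identity that $\bm{D}_\mu$ is an unbiased estimator of the smoothed gradient $\widehat{\nabla} f_{\bm{W}}$, and a bias bound quantifying how far $\widehat{\nabla}f_{\bm{W}}$ lies from the true gradient $\nabla f_{\bm{W}}$. The only ingredients needed are Lemma \ref{lemma: gradient of compositional} (the uniform cogradient bound $B_F$), the regularity in \textbf{(A3)} (the $L_{H,0}$-Lipschitz continuity and $L_{H,1}$-Lipschitz gradient of $\bm{H}(\cdot,\omega)$), the defining expressions \eqref{eqn: ZO gradient approximation}--\eqref{eqn: ZO gradient sample of the compositional function}, the identity $\nabla f_{\bm{W}}(\bm{\theta}) = \mathbb{E}\{\nabla_{\bm{\theta}} F(\bm{W},\bm{H}(\bm{\theta},\omega))\}$ recorded in Lemma \ref{lemma: auxiliary function properties}, and standard Gaussian moment identities for $\bm{U}\sim\mathcal{N}(\bm{0},\bm{I})$. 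Throughout I write $\bm{c}_R \triangleq (\Re\tfrac{\partial^{\circ}}{\partial\bm{z}}F)^\top$ and $\bm{c}_I \triangleq (\Re(j\tfrac{\partial^{\circ}}{\partial\bm{z}}F))^\top$ for the two cogradient factors, evaluated at $\bm{z}=\bm{H}(\bm{\theta},\omega)$, which are crucially \emph{independent of} $\bm{U}$.

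For the second-moment bound I would first obtain a \emph{pathwise} estimate. The matrices $\bm{\Delta}_\mu^R,\bm{\Delta}_\mu^I$ are rank-one outer products $\tfrac1\mu\bm{U}(\Re\bm{\Delta}_\mu)^\top$ and $\tfrac1\mu\bm{U}(\Im\bm{\Delta}_\mu)^\top$, so that $\bm{D}_\mu = \tfrac1\mu\bm{U}\big(\langle\Re\bm{\Delta}_\mu,\bm{c}_R\rangle + \langle\Im\bm{\Delta}_\mu,\bm{c}_I\rangle\big)$. A single Cauchy--Schwarz step over the stacked vectors $(\Re\bm{\Delta}_\mu,\Im\bm{\Delta}_\mu)$ and $(\bm{c}_R,\bm{c}_I)$ --- rather than the cruder split via $(a+b)^2\le 2(a^2+b^2)$ --- combined with $\|\bm{c}_R\|^2+\|\bm{c}_I\|^2 = \|\tfrac{\partial^{\circ}}{\partial\bm{z}}F\|^2\le B_F^2$ (Lemma \ref{lemma: gradient of compositional}) and the Lipschitz bound $\|\bm{\Delta}_\mu\|\le 2\mu L_{H,0}\|\bm{U}\|$ from \textbf{(A3)}, yields $\|\bm{D}_\mu\|\le 2 L_{H,0}B_F\|\bm{U}\|^2$. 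Squaring and taking expectations, the claim follows from $\mathbb{E}\{\|\bm{U}\|^4\} = S^2+2S$ (the second moment of a $\chi^2_S$ variable); the single Cauchy--Schwarz is exactly what produces the constant $4$ rather than $8$.

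The unbiasedness identity is essentially bookkeeping on the definitions, made rigorous by the stated independence of $\bm{U}$ and $\omega$. I would condition on $\omega$ and take the expectation over $\bm{U}$ only: since $\bm{c}_R,\bm{c}_I$ do not depend on $\bm{U}$ they factor out, and it remains to note that $\mathbb{E}\{\bm{\Delta}_\mu^R\mid\omega\} = 2\,\nabla_{\bm{\theta}}\bm{H}_\mu^R(\bm{\theta},\omega)$ (and likewise for the imaginary part), the factor $2$ arising from the $\tfrac1\mu$ normalization in \eqref{eqn: ZO gradient sample of the compositional function} versus the $\tfrac1{2\mu}$ in \eqref{eqn: ZO gradient approximation}. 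Matching against the definition \eqref{eqn: ZO gradient of the compositional function} gives $\mathbb{E}\{\bm{D}_\mu\mid\omega\} = \nabla_{\bm{\theta}}^\mu F(\bm{W},\bm{H}(\bm{\theta},\omega))$; taking the outer expectation over $\omega$ and reading the final equality as the definition of $\widehat{\nabla}f_{\bm{W}}$ closes this part.

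The bias bound is the delicate step, and I expect it to be the main obstacle. I would first reduce the composite bias to the channel-gradient bias: subtracting \eqref{eqn: Wirtinger gradient of the compositional objective} from \eqref{eqn: ZO gradient of the compositional function}, the two expressions share the same $\bm{U}$-independent cogradient factors, so a Cauchy--Schwarz step over the real/imaginary parts with $\|\bm{c}_R\|^2+\|\bm{c}_I\|^2\le B_F^2$ gives $\|\widehat{\nabla}f_{\bm{W}} - \nabla f_{\bm{W}}\|\le 2B_F\,\mathbb{E}\{\|\nabla_{\bm{\theta}}\bm{H}_\mu(\bm{\theta},\omega) - \nabla_{\bm{\theta}}\bm{H}(\bm{\theta},\omega)\|\}$. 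The crucial observation is that the \emph{symmetric} two-point quotient is unbiased for the gradient of the Gaussian-smoothed channel: Stein's identity (Gaussian integration by parts) yields $\nabla_{\bm{\theta}}\bm{H}_\mu(\bm{\theta},\omega) = \mathbb{E}\{\nabla_{\bm{\theta}}\bm{H}(\bm{\theta}+\mu\bm{U},\omega)\mid\omega\}$, whence by \textbf{(A3)} and Jensen $\|\nabla_{\bm{\theta}}\bm{H}_\mu - \nabla_{\bm{\theta}}\bm{H}\|\le L_{H,1}\mu\,\mathbb{E}\{\|\bm{U}\|\}\le L_{H,1}\mu\sqrt{S}$ per (real) output component, and aggregating over the $M$ components produces the factor $\mu L_{H,1}\sqrt{MS}$. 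The obstacle is precisely this dimension dependence: naively Taylor-expanding the finite difference and invoking the $L_{H,1}$-Lipschitz gradient only delivers a factor of order $\mathbb{E}\{\|\bm{U}\|^3\}\sim S^{3/2}$ per component, which is too lossy; obtaining the sharp $\sqrt{S}$ (hence $\sqrt{MS}$ after aggregation) genuinely relies on recognizing the symmetric estimate as the exact gradient of the smoothed channel, and on handling the matrix norm and per-component aggregation consistently.
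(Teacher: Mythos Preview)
Your proposal is correct and follows the standard zeroth-order analysis that underlies the result; the paper itself does not give a self-contained argument but simply defers to \cite[Lemmata~5 and~6]{IEEE_tran_Hashmietal}, whose proofs proceed exactly along the lines you sketch (pathwise Cauchy--Schwarz plus $\mathbb{E}\{\|\bm{U}\|^4\}=S^2+2S$ for the variance, conditioning on $\omega$ for unbiasedness, and Stein's identity with the $L_{H,1}$-Lipschitz gradient for the bias). One small remark: your Stein-based route to the bias in fact delivers $2\mu B_F L_{H,1}\sqrt{S}$ already at the matrix level, so the additional factor $\sqrt{M}$ in the stated bound is a (harmless) artifact of arguing componentwise and aggregating via a Frobenius-type estimate, not something your argument is missing.
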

\begin{proof}Trivial extension of \cite[Lemmata 5, 6]{IEEE_tran_Hashmietal}.
\end{proof}
\textit{Moreau Envelope:} We write the objective function of \eqref{eqn: first-stage problem} as $\phi(\bm{\theta}) \triangleq -f(\bm{\theta}) + \delta_{\Theta}(\bm{\theta})$, where $\delta_{\Theta}(\bm{\theta})$ denotes the indicator function for the convex compactum $\Theta$. Then, given a penalty parameter $\lambda > 0$, we define the \textit{proximity operator} as 
\[\textbf{prox}_{(1/\lambda) \phi}(\bm{u}) \triangleq \underset{\bm{\theta} \in \mathbb{R}^S}{\arg\min} \left\{\phi(\bm{\theta}) + \frac{\lambda}{2}\|\bm{u}-\bm{\theta}\|^2 \right\}, \]
\noindent and the corresponding Moreau envelope as
\[ \phi^{1/\lambda}(\bm{u}) \triangleq \min_{\bm{\theta} \in \mathbb{R}^S} \left\{\phi(\bm{\theta}) + \frac{\lambda}{2}\|\bm{u}-\bm{\theta}\|^2 \right\}.\]
\noindent Upon noting that $\phi$ is $\rho$-weakly convex, we observe that the Moreau envelope with parameter $\lambda > \rho$ is smooth even if $\phi(\cdot)$ is not, and the magnitude of its gradient can be used as a near-stationarity measure of the non-smooth problem of interest. If a point $\bm{\theta}$ is $\epsilon$-stationary for the Moreau envelope, then it is close to a near-stationary point of \eqref{eqn: two-stage problem}. This is a  standard approach of measuring progress in the context of weakly convex optimization (see \cite{SIAMOpt:Davis,arXiv:Pougk-Kal}, among others). 
\par Note that the definition of $f$ depends on the particular choice of solution to the second-stage problem \eqref{eqn: second-stage problem}. Nonetheless, the values of $f$ and its gradients are the same for any such choice, and thus we have the freedom of choosing the second-stage solution arbitrarily. This fact is used in the subsequent convergence analysis and is important to keep in mind. The following result establishes a convergence rate for \ref{Algorithm: iZoSGA}.

\begin{theorem} \label{thm: convergence analysis}
Assume that $\{\bm{\theta}_t\}_{t = 0}^T$, $T > 0$ is generated by \textnormal{\ref{Algorithm: iZoSGA}}, where $\bm{\theta}_{t^*}$ is the point that the algorithm returns. Let Assumption \textnormal{\ref{assumption: two-stage problem}} be in effect and, for each $t$, let $\varepsilon_t(\bm{\theta}_t,\omega_t) = \|\widetilde{\bm{W}}_t - \bm{W}_t^*\|$ be the oracle error at iteration $t$, where $\bm{W}_t^*$ is some solution to the second-stage problem given $\bm{\theta}_t$ and $\omega_t$. For any $\bar{\rho} > \rho$, it holds that
\begin{align}
      &  \mathbb{E}\left\{\left\| \nabla \phi^{1/\bar{\rho}}(\bm{\theta}_{t^*})\right\|^2\right\} \label{eqn: returning point Moreau gradient}\\
      &   \leq \hspace{-2pt}
      \frac{\bar{\rho}}{\bar{\rho}-\rho}\Bigg( \frac{\phi^{1/\bar{\rho}}(\bm{\theta}_0) - \min \phi(\bm{\theta}) +  \frac{1}{2}C_2 \bar{\rho}(T+1) \eta^2 }{(T+1)\eta} 
      \hspace{-1pt}+\hspace{-1pt}
      \frac{\bar{\rho}}{2}\widebar{C}_1 \Bigg), \nonumber
\end{align}
\noindent where 
\[\widebar{C}_1 \triangleq 2\Delta_{\Theta}\left(\frac{L_{f,1}}{T+1} \sum_{t=0}^{T}\mathbb{E}\{\varepsilon(\bm{\theta}_t,\omega_t)\} + 2\mu B_F L_{H,1}\sqrt{MS} \right),\]
\noindent and $C_2 = 4 B_F^2 L_{H,0}^2(S^2+2S)$. Moreover, if $\bar{\rho} = 2\rho$ and
\[ \eta = \sqrt{\frac{\Delta_f}{C_2 \rho  (T+1)}},\]
\noindent for some $\Delta_f \geq \phi^{1/(2\rho)}(\bm{\theta}_0) - \min  \phi(\bm{\theta})$, then it holds that
\begin{equation} \label{eqn: bound on Moreau envelope}
    \begin{split}
     \mathbb{E}\left\{\left\| \nabla \phi^{1/(2\rho)}(\bm{\theta}_{t^*})\right\|^2\right\} \leq &\ 8\left(\sqrt{\frac{\Delta_f \rho C_2}{4(T+1)}} + \frac{1}{2}\widebar{C}_1\rho \right).
    \end{split}
\end{equation}
\end{theorem}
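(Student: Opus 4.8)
The plan is to use the Moreau envelope $\phi^{1/\bar\rho}$ as a potential function and follow the weakly convex stochastic (quasi-)gradient framework (cf. \cite{SIAMOpt:Davis,arXiv:Pougk-Kal}), augmented to absorb both the Gaussian-smoothing bias and the inexact-oracle bias. Writing $\hat{\bm{\theta}}_t \triangleq \textbf{prox}_{(1/\bar\rho)\phi}(\bm{\theta}_t) \in \Theta$ for the proximal point, I would first record the two standard facts $\nabla\phi^{1/\bar\rho}(\bm{\theta}_t) = \bar\rho(\bm{\theta}_t-\hat{\bm{\theta}}_t)$ and $\phi^{1/\bar\rho}(\bm{\theta}_t) = \phi(\hat{\bm{\theta}}_t)+\frac{\bar\rho}{2}\|\bm{\theta}_t-\hat{\bm{\theta}}_t\|^2$. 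Evaluating the envelope at $\bm{\theta}_{t+1}$ with the suboptimal argument $\hat{\bm{\theta}}_t$, and using non-expansiveness of $\textbf{proj}_\Theta$ together with $\hat{\bm{\theta}}_t\in\Theta$, I obtain the one-step bound $\phi^{1/\bar\rho}(\bm{\theta}_{t+1}) \le \phi^{1/\bar\rho}(\bm{\theta}_t) + \bar\rho\eta\langle\bm{D}_{\mu,t},\bm{\theta}_t-\hat{\bm{\theta}}_t\rangle + \tfrac{\bar\rho\eta^2}{2}\|\bm{D}_{\mu,t}\|^2$.

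Next I would take the conditional expectation given the history $\mathcal{F}_t$. The second moment is handled directly by Lemma \ref{lemma: bound on sample gradient variance}, giving $\mathbb{E}\{\|\bm{D}_{\mu,t}\|^2\mid\mathcal{F}_t\}\le C_2$. For the inner product I need the conditional mean. Since $\widetilde{\bm{W}}_t$ is a measurable function of $\omega_t$ while $\bm{U}_t$ is drawn independently of $\omega_t$, the random-$\bm{W}(\omega)$ form of Lemma \ref{lemma: bound on sample gradient variance} applies and yields $\mathbb{E}\{\bm{D}_{\mu,t}\mid\mathcal{F}_t\} = \widehat{\nabla}f_{\widetilde{\bm{W}}_t}(\bm{\theta}_t)$. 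I would then split this into three pieces, $\widehat{\nabla}f_{\widetilde{\bm{W}}_t}(\bm{\theta}_t) = \nabla f(\bm{\theta}_t) + [\widehat{\nabla}f_{\widetilde{\bm{W}}_t}(\bm{\theta}_t)-\nabla f_{\widetilde{\bm{W}}_t}(\bm{\theta}_t)] + [\nabla f_{\widetilde{\bm{W}}_t}(\bm{\theta}_t)-\nabla f_{\bm{W}_t^*}(\bm{\theta}_t)]$, where the identity $\nabla f_{\bm{W}_t^*}(\bm{\theta}_t)=\nabla f(\bm{\theta}_t)$ is exactly Theorem \ref{thm: gradient of f} (together with the solution-independence of $f$ and $\nabla f$). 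The middle, smoothing, term is bounded by $2\mu B_F L_{H,1}\sqrt{MS}$ (Lemma \ref{lemma: bound on sample gradient variance}); the last, oracle, term is bounded by $L_{f,1}\mathbb{E}\{\|\widetilde{\bm{W}}_t-\bm{W}_t^*\|\mid\mathcal{F}_t\}=L_{f,1}\mathbb{E}\{\varepsilon(\bm{\theta}_t,\omega_t)\mid\mathcal{F}_t\}$ via the cross-Lipschitz estimate of Lemma \ref{lemma: auxiliary function properties}. Thus the conditional bias $\bm{b}_t\triangleq\mathbb{E}\{\bm{D}_{\mu,t}\mid\mathcal{F}_t\}-\nabla f(\bm{\theta}_t)$ obeys $\|\bm{b}_t\|\le 2\mu B_F L_{H,1}\sqrt{MS}+L_{f,1}\mathbb{E}\{\varepsilon(\bm{\theta}_t,\omega_t)\mid\mathcal{F}_t\}$.

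With the bias isolated, the descent comes from weak convexity. Because $-\nabla f(\bm{\theta}_t)\in\partial\phi(\bm{\theta}_t)$ and $\phi$ is $\rho$-weakly convex (Theorem \ref{thm: gradient of f}), the subgradient inequality at $\hat{\bm{\theta}}_t$ combined with $\phi(\hat{\bm{\theta}}_t)=\phi^{1/\bar\rho}(\bm{\theta}_t)-\frac{\bar\rho}{2}\|\bm{\theta}_t-\hat{\bm{\theta}}_t\|^2\le\phi(\bm{\theta}_t)-\frac{\bar\rho}{2}\|\bm{\theta}_t-\hat{\bm{\theta}}_t\|^2$ yields $\langle\nabla f(\bm{\theta}_t),\bm{\theta}_t-\hat{\bm{\theta}}_t\rangle\le-\frac{\bar\rho-\rho}{2}\|\bm{\theta}_t-\hat{\bm{\theta}}_t\|^2=-\frac{\bar\rho-\rho}{2\bar\rho^2}\|\nabla\phi^{1/\bar\rho}(\bm{\theta}_t)\|^2$. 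The residual term $\langle\bm{b}_t,\bm{\theta}_t-\hat{\bm{\theta}}_t\rangle$ I would bound by Cauchy--Schwarz using $\|\bm{\theta}_t-\hat{\bm{\theta}}_t\|\le \Delta_\Theta$ (both points lie in the compactum $\Theta$). Substituting back gives the recursion $\mathbb{E}\{\phi^{1/\bar\rho}(\bm{\theta}_{t+1})\}\le\mathbb{E}\{\phi^{1/\bar\rho}(\bm{\theta}_t)\}-\frac{\eta(\bar\rho-\rho)}{2\bar\rho}\mathbb{E}\{\|\nabla\phi^{1/\bar\rho}(\bm{\theta}_t)\|^2\}+\bar\rho\eta\Delta_\Theta\mathbb{E}\{\|\bm{b}_t\|\}+\frac{\bar\rho\eta^2}{2}C_2$. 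Telescoping over $t=0,\dots,T$, lower-bounding the terminal envelope by $\min\phi$, rearranging, and invoking the uniform law of $t^*$ so that $\mathbb{E}\{\|\nabla\phi^{1/\bar\rho}(\bm{\theta}_{t^*})\|^2\}=\frac{1}{T+1}\sum_t\mathbb{E}\{\|\nabla\phi^{1/\bar\rho}(\bm{\theta}_t)\|^2\}$ delivers \eqref{eqn: returning point Moreau gradient}, with the averaged bias collapsing into $\widebar{C}_1$. The second estimate \eqref{eqn: bound on Moreau envelope} then follows by setting $\bar\rho=2\rho$ (so $\frac{\bar\rho}{\bar\rho-\rho}=2$), substituting $\eta=\sqrt{\Delta_f/(C_2\rho(T+1))}$, and using $\Delta_f\ge\phi^{1/(2\rho)}(\bm{\theta}_0)-\min\phi$ to balance the optimization and variance terms.

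The main obstacle is the oracle-bias bookkeeping in the second step, not the weakly convex descent, which is by now standard. Two points require care: first, that the conditional unbiasedness identity $\mathbb{E}\{\bm{D}_{\mu,t}\mid\mathcal{F}_t\}=\widehat{\nabla}f_{\widetilde{\bm{W}}_t}(\bm{\theta}_t)$ survives the fact that the oracle output $\widetilde{\bm{W}}_t$ and the probing channel $\bm{H}(\bm{\theta}_t,\omega_t)$ share the same realization $\omega_t$ --- this is precisely why Lemma \ref{lemma: bound on sample gradient variance} must be invoked in its random-$\bm{W}(\omega)$ form rather than for a deterministic beamformer; and second, that the target $\nabla f(\bm{\theta}_t)$ is reached only after identifying $\nabla f_{\bm{W}_t^*}(\bm{\theta}_t)$ with the true first-stage gradient through Theorem \ref{thm: gradient of f}, which hinges on the solution-independence of $f$. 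Unlike the smoothing term, the oracle term cannot be driven to zero by a free parameter; it persists as the averaged error $\frac{L_{f,1}}{T+1}\sum_t\mathbb{E}\{\varepsilon(\bm{\theta}_t,\omega_t)\}$ inside $\widebar{C}_1$, which is exactly what fixes the size of the stationarity neighborhood and motivates the error-control analysis of Section \ref{sec: compatibility of assumptions}.
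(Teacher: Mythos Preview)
Your proposal follows essentially the same route as the paper: Moreau-envelope potential, one-step expansion via projection nonexpansiveness, the three-way split $\widehat{\nabla}f_{\widetilde{\bm{W}}_t}=\nabla f+[\widehat{\nabla}f_{\widetilde{\bm{W}}_t}-\nabla f_{\widetilde{\bm{W}}_t}]+[\nabla f_{\widetilde{\bm{W}}_t}-\nabla f_{\bm{W}_t^*}]$ handled by Lemmata~\ref{lemma: bound on sample gradient variance} and~\ref{lemma: auxiliary function properties}, Cauchy--Schwarz with $\Delta_\Theta$ on the bias, telescoping, and uniform randomization over $t^*$.

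The one discrepancy is in the descent step. Your argument (weak-convexity subgradient inequality plus $\phi^{1/\bar\rho}(\bm{\theta}_t)\le\phi(\bm{\theta}_t)$) yields
\[
\langle\nabla f(\bm{\theta}_t),\bm{\theta}_t-\hat{\bm{\theta}}_t\rangle\le-\tfrac{\bar\rho-\rho}{2}\|\bm{\theta}_t-\hat{\bm{\theta}}_t\|^2,
\]
whereas the paper bounds $f(\bm{\theta}_t)-f(\hat{\bm{\theta}}_t)+\tfrac{\rho}{2}\|\bm{\theta}_t-\hat{\bm{\theta}}_t\|^2$ using the $(\bar\rho-\rho)$-\emph{strong} convexity of the map $\bm{\theta}\mapsto -f(\bm{\theta})+\delta_\Theta(\bm{\theta})+\tfrac{\bar\rho}{2}\|\bm{\theta}-\bm{\theta}_t\|^2$ at its minimizer $\hat{\bm{\theta}}_t$, obtaining $-(\bar\rho-\rho)\|\bm{\theta}_t-\hat{\bm{\theta}}_t\|^2$ instead. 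Consequently your recursion carries $\frac{\eta(\bar\rho-\rho)}{2\bar\rho}$ in front of $\|\nabla\phi^{1/\bar\rho}(\bm{\theta}_t)\|^2$ rather than the paper's $\frac{\eta(\bar\rho-\rho)}{\bar\rho}$, and the final bound comes out with prefactor $\frac{2\bar\rho}{\bar\rho-\rho}$ in lieu of $\frac{\bar\rho}{\bar\rho-\rho}$ in~\eqref{eqn: returning point Moreau gradient} (and a corresponding factor in~\eqref{eqn: bound on Moreau envelope}). To match the stated constants exactly, replace your envelope inequality $\phi(\hat{\bm{\theta}}_t)\le\phi(\bm{\theta}_t)-\tfrac{\bar\rho}{2}\|\cdot\|^2$ by the sharper strong-convexity estimate; everything else in your plan is correct and coincides with the paper's proof.
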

\begin{proof}
    For any $t \geq 0$, we let  $\hat{\bm{\theta}}_{t} \triangleq \textbf{prox}_{\phi/\bar{\rho}}(\bm{\theta}_{t})$, and $\mathbb{E}_{[t]}\left\{\cdot\right\} \triangleq \mathbb{E}\left\{ \vert \omega_{t-1},\mathbf{U}_{t-1},\ldots,\omega_0,\mathbf{U}_0\right\}.$ Algorithm \ref{Algorithm: iZoSGA} yields
    \begin{equation*}
        \begin{split}
            &\mathbb{E}_{[t]} \left\{ \left\|\hat{\bm{\theta}}_{t} - \bm{\theta}_{t+1}\right\|^2\right\} =   \mathbb{E}_{[t]} \left\{ \left\|\hat{\bm{\theta}}_{t} - \textbf{Proj}_{\Theta}\left(\bm{\theta}_{t} + \eta \mathbf{D}_{\mu,t} \right)\right\|^2\right\}\\
            &\quad = \mathbb{E}_{[t]} \left\{ \left\|\textbf{Proj}_{\Theta}\left(\hat{\bm{\theta}}_{t}\right) - \textbf{Proj}_{\Theta}\left(\bm{\theta}_{t} + \eta \mathbf{D}_{\mu,t} \right)\right\|^2\right\}\\
            &\quad \leq \mathbb{E}_{[t]} \left\{ \left\|\hat{\bm{\theta}}_{t} - \bm{\theta}_{t} - \eta \mathbf{D}_{\mu,t} \right\|^2\right\}\\
            &\quad \leq \left\|\hat{\bm{\theta}}_{t} - \bm{\theta}_{t}\right\|^2  - \mathbb{E}_{[t]}\left\{ 2\eta\left\langle \hat{\bm{\theta}}_{t} - \bm{\theta}_{t},  \mathbf{D}_{\mu,t} \right\rangle  -\eta^2 \|\bm{D}_{\mu,t}\|^2\right\}\\
            &\quad \leq \left\|\hat{\bm{\theta}}_{t} - \bm{\theta}_{t}\right\|^2  -  2\eta\left\langle \hat{\bm{\theta}}_{t} - \bm{\theta}_{t},  \widehat{\nabla} f_{\widetilde{\bm{W}}_t}(\bm{\theta}_t) \right\rangle  \\ &\qquad +4\eta^2 B_F^2 L_{H,0}^2(S^2+2S), \\
        \end{split}
    \end{equation*}
    \noindent where in the first inequality we used the nonexansiveness of the projection operator, and in the last inequality we utilized Lemma \ref{lemma: bound on sample gradient variance}. Next, from the third part of Lemma \ref{lemma: bound on sample gradient variance} we obtain
     \begin{equation*}
        \begin{split}
            &\mathbb{E}_{[t]} \left\{ \left\|\hat{\bm{\theta}}_{t} - \bm{\theta}_{t+1}\right\|^2\right\}\\
            &\quad \leq \left\|\hat{\bm{\theta}}_{t} - \bm{\theta}_{t}\right\|^2  -  2\eta\left\langle \hat{\bm{\theta}}_{t} - \bm{\theta}_{t}, \nabla f(\bm{\theta}_t) \right\rangle  \\ &\qquad - 2\eta\left\langle \hat{\bm{\theta}}_{t} - \bm{\theta}_{t},   \widehat{\nabla} f_{\widetilde{\bm{W}}_t}(\bm{\theta}_t)-\nabla f_{\widetilde{\bm{W}}_t}(\bm{\theta}_t) \right\rangle \\ 
            &\qquad - 2\eta\left\langle \hat{\bm{\theta}}_{t} - \bm{\theta}_{t},  \nabla f_{\widetilde{\bm{W}}_t}(\bm{\theta}_t) -\nabla f(\bm{\theta}_t) \right\rangle\\
            &\qquad +4\eta^2 B_F^2 L_{H,0}^2(S^2+2S) \\
            &\quad \leq \left\|\hat{\bm{\theta}}_{t} - \bm{\theta}_{t}\right\|^2  -  2\eta\left\langle \hat{\bm{\theta}}_{t} - \bm{\theta}_{t}, \nabla f(\bm{\theta}_t) \right\rangle  \\
            &\qquad + 2\eta \Delta_{\Theta} L_{f,1} \mathbb{E}_{[t]}\left\{\left\|\widetilde{\bm{W}}_t - \bm{W}_t^*\right\|\right\}\\
            &\qquad + 4\eta \mu\Delta_{\Theta}B_F L_{H,1}\sqrt{MS}  +4\eta^2 B_F^2 L_{H,0}^2(S^2+2S), \\
        \end{split}
    \end{equation*}
    \noindent where we used \eqref{eqn: cross Lipschitz smooth}, and we let $\Delta_{\Theta}$ be the diamater of the compact set $\Theta$, and $\bm{W}_t^*$ be the solution of the second-stage problem (defined by $\bm{\theta}_t$ and $\omega_t$) attaining \eqref{eqn: solution distance error} (assuming that this very point is used to define the function $f$; note that we have the freedom of choosing any $\bm{W}^*$ from the solution set of the second-stage problem). Thus, for any $\bar{\rho} > \rho$, we obtain
\begin{equation*}
    \begin{split}
       & \mathbb{E}_{[t]}\left\{\phi^{1/\bar{\rho}}(\bm{\theta}_{t+1})\right\} \leq  \mathbb{E}_{[t]}\left\{f(\bar{\bm{\theta}}_{t}) +  \frac{\bar{\rho}}{2}\big\|\hat{\bm{\theta}}_t - \bm{\theta}_{t+1}\big\|^2\right\}\\
      &\quad  \leq \phi\big(\hat{\bm{\theta}}_t\big) + \frac{\bar{\rho}}{2}\left\|\hat{\bm{\theta}}_{t} - \bm{\theta}_{t}\right\|^2  -  \bar{\rho}\eta\left\langle \hat{\bm{\theta}}_{t} -\bm{\theta}_{t}, \nabla f(\bm{\theta}_t) \right\rangle \\
      &\qquad + \frac{\bar{\rho}}{2}\left(\eta  C_1 +\eta^2 C_2 \right)\\
     &\quad \leq  \phi^{1/\bar{\rho}}(\bm{\theta}_t) -  \bar{\rho}\eta\left\langle \hat{\bm{\theta}}_{t} -\bm{\theta}_{t}, \nabla f(\bm{\theta}_t) \right\rangle + \frac{\bar{\rho}}{2}\left(\eta  C_1 +\eta^2 C_2 \right)\\
       &\quad \leq  \phi^{1/\bar{\rho}}(\bm{\theta}_t) + \bar{\rho}\eta\left(f(\bm{\theta}_t)-f\big(\hat{\bm{\theta}}_t\big) + \frac{\rho}{2}\left\|\bm{\theta}_t - \hat{\bm{\theta}}_t\right\|^2\right) \\
       &\qquad +\frac{\bar{\rho}}{2}\left(C_1 \eta +  C_2\eta^2\right),
    \end{split}
\end{equation*}
\noindent where in the second inequality we have substituted $C_1  \triangleq 2\Delta_{\Theta}\left(L_{f,1} \mathbb{E}_{[t]}\{\varepsilon(\bm{\theta}_t,\omega_t)\} + 2\mu B_F L_{H,1}\sqrt{MS} \right)$ and $C_2 \triangleq 4 B_F^2 L_{H,0}^2(S^2+2S)$, in the third inequality we used the definition of the Moreau envelope along with the definition of $\hat{\bm{\theta}}_t$, and in the fourth inequality we used the weak convexity of $-f(\cdot)$. Next, we follow the developments in \cite[Section 3.1]{SIAMOpt:Davis}, by noting that the mapping $\bm{\theta} \mapsto -f(\bm{\theta}) + \frac{\bar{\rho}}{2}\|\bm{\theta}-\bm{\theta}_t\|^2$ is strongly convex with parameter $\bar{\rho}-\rho$, and is minimized at $\hat{\bm{\theta}}_t$. Hence,
\begin{equation*}
\begin{split}
& f(\hat{\bm{\theta}}_t) - f(\bm{\theta}_t)  - \frac{\rho}{2}\|\bm{\theta}_t - \hat{\bm{\theta}}_t\|^2 =  \left(- f(\bm{\theta}_t)  + \frac{\bar{\rho}}{2}\left\|\hat{\bm{\theta}}_t - \hat{\bm{\theta}}_t\right\|^2\right)   \\ &\qquad -\left(-f(\hat{\bm{\theta}}_t) + \frac{\bar{\rho}}{2}\left\|{\bm{\theta}}_t - \hat{\bm{\theta}}_t\right\|^2 \right)  + \frac{\bar{\rho}-\rho}{2}\left\|\bm{\theta}_t - \hat{\bm{\theta}}_t\right\|^2  \\
& \quad \geq (\bar{\rho}-\rho)\left\|\bm{\theta}_t - \hat{\bm{\theta}}_t\right\|^2 \equiv \frac{\bar{\rho}-\rho}{\bar{\rho}^2}\left\|\nabla \phi^{1/\bar{\rho}}(\bm{\theta}_t) \right\|^2, 
\end{split}
\end{equation*}
\noindent where the last equivalence follows from \cite[Lemma 2.2]{SIAMOpt:Davis}. Thus, by combining the last two inequalities, we obtain
\begin{equation*}
    \begin{split}
       \mathbb{E}_{[t]}\left\{\phi^{1/\bar{\rho}}(\bm{\theta}_{t+1})\right\} \leq &\   \phi^{1/\bar{\rho}}(\bm{\theta}_t) - \frac{\eta (\bar{\rho}-\rho)}{\bar{\rho}}\left\| \nabla \phi^{1/\bar{\rho}}(\bm{\theta}_t)\right\|^2 \\&\quad + \frac{\bar{\rho}}{2}\left(C_1 \eta +  C_2\eta^2\right).
    \end{split}
\end{equation*}
\noindent Taking expectations with respect to the filtration $\omega_0$, $\bm{U}_0$, $\ldots,\ \omega_{t-1}$, $\bm{U}_{t-1}$ and using the law of total expectation, yields
\begin{equation*} \label{eqn: convergence analysis Moreau envelope expected descent}
\begin{split}
    & \mathbb{E}\left\{ \phi^{1/\bar{\rho}}(\bm{\theta}_{t+1}) \right\} \leq   \mathbb{E}\left\{\phi^{1/\bar{\rho}}(\bm{\theta}_t)\right\} \\
    &\quad + {\bar{\rho}} \Delta_{\Theta} \eta \left(L_{f,1} \mathbb{E}\{\varepsilon(\bm{\theta}_t,\omega_t)\} + 2\mu B_F L_{H,1}\sqrt{MS} \right)  \\   
   &\quad  + \frac{\bar{\rho}}{2}C_2\eta^2 -  \frac{\eta_t(\bar{\rho}-\rho)}{\bar{\rho}} \mathbb{E}\left\{\left\| \nabla \phi^{1/\bar{\rho}} (\bm{\theta}_t)\right\|^2\right\}.
     \end{split}
\end{equation*}
\noindent Subsequently, we can unfold the last inequality to obtain
\begin{equation*}
    \begin{split}
        &\mathbb{E}\left\{\phi^{1/\bar{\rho}}(\bm{\theta}_{T+1})\right\} \leq \phi^{1/\bar{\rho}}(\bm{\theta}_0) +C_2 \frac{\bar{\rho}}{2}(T+1) \eta^2 \\ 
        & \ + {\bar{\rho}} \Delta_{\Theta} \eta \left(L_{f,1} \sum_{t=0}^{T}\mathbb{E}\{\varepsilon(\bm{\theta}_t,\omega_t)\} + 2\mu B_F L_{H,1}\sqrt{MS} \right) \eta \\&\  - \frac{\bar{\rho}-\rho}{\bar{\rho}} \sum_{t=0}^T \eta_t\mathbb{E}\left\{\left\| \nabla \phi^{1/\bar{\rho}}(\bm{\theta}_t)\right\|^2\right\}.
    \end{split}
\end{equation*}
\noindent Then, we bound the left-hand side from below by $\phi(\bm{\theta}^*) \triangleq \min_{\bm{\theta} \in \Theta} f(\bm{\theta})$, and rearrange, to obtain
\begin{equation*}
    \begin{split}
   & \frac{1}{T+1} \sum_{t=0}^T  \mathbb{E}\left\{\left\| \nabla \phi^{1/\bar{\rho}}(\bm{\theta}_t)\right\|^2\right\} \\
   &\quad \leq  \frac{\bar{\rho}}{\bar{\rho}-\rho}\Bigg( \frac{\phi^{1/\bar{\rho}}(\bm{\theta}_0) - \phi(\bm{\theta}^*) + (1/2)C_2\bar{\rho} (T+1)\eta^2 }{(T+1)\eta}\Bigg) \\
   &\qquad   +\frac{\bar{\rho}^2}{2(\bar{\rho}-\rho)} \widebar{C}_1,
    \end{split}
\end{equation*}
\noindent where
\[ \widebar{C}_1 \triangleq 2\Delta_{\Theta}\left(\frac{L_{f,1}}{T+1} \sum_{t=0}^{T}\mathbb{E}\{\varepsilon(\bm{\theta}_t,\omega_t)\} + 2\mu B_F L_{H,1}\sqrt{MS} \right).\]
\noindent Since the left-hand side is exactly $\mathbb{E}\{\| \nabla \phi^{1/\bar{\rho}}(\bm{\theta}_{t^*})\|^2\}$, we deduce that \eqref{eqn: returning point Moreau gradient} holds. Finally, setting $\bar{\rho} = 2\rho$, letting ${\Delta_f} \geq \phi^{1/(2\rho)}(\bm{\theta}_0) - \min \phi(\bm{\theta})$, and choosing the step size as
\[ \eta = \sqrt{\frac{\Delta_f}{C_2(T+1)}},\]
\noindent we obtain \eqref{eqn: bound on Moreau envelope} which completes the proof.
\end{proof}
 \label{remark: complexity}
Let us define the averaged-over-the-iterates expected error
\[ \bar{\varepsilon} \triangleq \frac{1}{T+1} \sum_{t=0}^T \mathbb{E}\left\{\varepsilon(\bm{\theta}_t,\omega_t)\right\},\]
\noindent where $\{(\bm{\theta}_t,\omega_t)\}_{t = 0}^T$ are produced by Algorithm \textnormal{\ref{Algorithm: iZoSGA}}. Observe that choosing by
$\mu = \mathcal{O}\big(1/
\sqrt{(M T)}\big)$ in Algorithm \textnormal{\ref{Algorithm: iZoSGA}}, Theorem \textnormal{\ref{thm: convergence analysis}}
yields that 
$$\mathbb{E}\big\{\big\Vert \nabla \varphi^{1/(2\rho)}\big(\bm{\theta}_{t^*}\big)\big\Vert
     \big\} \leq \epsilon + \mathcal{O}\left(\sqrt{\bar{\varepsilon}}\right),$$ 
after $\mathcal{O}\left(\sqrt{S}\epsilon^{-4}\right)$ iterations. In other words, if the inexact oracle yields ``approximate solutions" to the second-stage problems \textnormal{\eqref{eqn: second-stage problem}} that are \emph{on-average} (in the sense of the definition of $\bar{\varepsilon}$) $\bar{\varepsilon}$-far from any optimal solution of \textnormal{\eqref{eqn: second-stage problem}}, then this error propagates as $\sqrt{\bar{\varepsilon}}$, when solving the two-stage problem with inexact oracles using Algorithm \textnormal{\ref{Algorithm: iZoSGA}}. 
\par Thus, our analysis characterizes the propagation of error when using inexact solution estimates of the second-stage problem (enabling the use of an arbitrary inexact oracle), and provides an insight behind the favourable numerical behaviour of \textnormal{\ref{Algorithm: iZoSGA}} demonstrated in \textnormal{\cite{IEEE_tran_Hashmietal}} in the context of IRS-assisted two-stage beamforming. Indeed, while the analysis in \textnormal{\cite{IEEE_tran_Hashmietal}} assumes exact solutions of the second-stage problem, the implemented method utilized a few iterations of WMMSE \textnormal{\cite{wmmseShi2011}} for obtaining approximate solutions instead. Nonetheless, it was numerically demonstrated  that the method performs extraordinarily well over a wide range of experiments, yielding a new state-of-the-art algorithm for passive IRS-assisted beamforming for wireless communication systems. In the following section, we will demonstrate that in this context, the magnitude of the error $\bar{\varepsilon}$ can be directly controlled by an appropriate utilization of an iterative method like WMMSE, theoretically supporing the numerical results demonstrated herein as well as in \textnormal{\cite{IEEE_tran_Hashmietal}}.

\par Before proceeding to Section \ref{sec: compatibility of assumptions}, let us briefly observe some additional insights provided by the bound in Theorem \ref{thm: convergence analysis}. Specifically, the convergence bound involves the error term $\bar{\varepsilon} = \frac{1}{T+1} \sum_{t=0}^T \mathbb{E}\left\{\varepsilon(\bm{\theta}_t,\omega_t)\right\}$. This hints that when designing practical algorithms for the solution of problems of the form of \eqref{eqn: two-stage problem}, one has great flexibility in the design of an inexact solution method for the deterministic second-stage problem. Indeed, for example, one might utilize an iterative solution method that is able to consistently yield solutions to \eqref{eqn: second-stage problem} that are at most $\bar{\varepsilon}$ away from some optimal solution, for any $\bm{\theta} \in \Theta$ and a.e. $\omega \in \Omega$. Nonetheless, it is also possible to enable larger errors during certain steps and smaller errors during others, as long as the averaged-over-the-iterates expected error, i.e. $\bar{\varepsilon}$, remains controlled. This is especially important in practical scenarios, since we may carefully adjust the quality of the inexact oracle to the needs to the optimization instance at hand with the goal of minimizing energy consumption, while expecting a smooth behaviour from the algorithm. The importance of this observation is clearly numerically demonstrated in Section \ref{sec: numerical results}-\ref{subsec: varying WMMSE iterations}.
\vspace{-4bp}
\section{Controlling the inexact oracle error} \label{sec: compatibility of assumptions}
\par At first glance, controlling the error of the inexact oracle provided in Definition \ref{definition: oracle assumption} might seem like a strong assumption, in light of the inherent non-concavity of the second-stage problem \eqref{eqn: second-stage problem}. Nonetheless, in this section we showcase that for a wide-range of problems of interest, fully compatible with Assumption \ref{assumption: two-stage problem}, we can directly control this oracle error to a reasonable extent. Indeed, controlling the random error function appearing in \eqref{eqn: solution distance error} turns out to be very natural in the setting of beamforming for wireless communication systems.
\par To clarify this, we first focus on the five conditions outlined in Assumption \ref{assumption: two-stage problem}. As thoroughly explained  in \cite[Section IV and Appendix B]{IEEE_tran_Hashmietal}, the only condition that requires verification, in the context of practical passive IRS-aided beamforming optimization, is \textbf{(A4)}. Specifically, in \cite[Appendix B]{IEEE_tran_Hashmietal} it was identified that this condition holds in three general situations. 
\par In particular, \textbf{(A4)} holds (trivially) if \eqref{eqn: second-stage problem} admits a unique solution for each $\bm{\theta} \in \Theta$ and a.e. $\omega \in \Omega$ (which is the most standard assumption utilized in the relevant literature of two-stage and bilevel optimization; e.g., see \cite{DanskinMinMax,Kwon_etal_ICML23}), or if the solution set is connected and the second-stage problem satisfies some additional regularity conditions (e.g., see \cite[Section 4]{MathOR:Shapiro}).  However, another very general setting was identified, under which \textbf{(A4)} holds. This setting is particularly interesting, since it is very natural and is met in most practical applications related to passive IRS-aided beamforming in wireless communication systems. Specifically, it was identified that if for any $\bm{\theta} \in \Theta$ and a.e. $\omega \in \Omega$ the function $\max_{\bm{W} \in \mathcal{W}} F(\bm{W},\bm{H}(\bm{\theta},\omega)) - F(\bm{W},\bm{H}(\bm{\theta},\omega))$ is \textit{sub-analytic}, then it satisfies the \emph{\L{}ojasiewicz inequality} (see \cite{LojasiewiczInequality}) with uniform exponent (see \cite[Theorem 2.3]{denkowska2018upc} or \cite{LojasiewiczInequality2}), i.e.  there exists $\eta > 0$, and for a.e. $\omega \in \Omega$ and $\bm{\theta} \in \Theta$, a positive subanalytic function $C(\bm{H}(\cdot,\cdot)) > 0$ (where $\bm{H}(\cdot,\cdot)$ is the function representing the communication channel), such that for all $\bm{W} \in \mathcal{W}$,
    \begin{align} \label{eqn: Lojasiewicz}
        &\textnormal{dist}\Big(\bm{W}, \arg\max_{\bm{W} \in \mathcal{W}} F\left(\bm{W},\bm{H}(\bm{\theta},\omega)\right) \hspace{-2pt}
        \Big) 
        \\ & \ \leq C(\bm{H}) \Big(\max_{\bm{W} \in \mathcal{W}} F(\bm{W},\bm{H}(\bm{\theta},\omega)) - F(\bm{W},\bm{H}(\bm{\theta},\omega))\Big)^{\eta}. \nonumber
    \end{align}
\noindent It was then shown \cite[Theorem 11]{IEEE_tran_Hashmietal} that the mild condition of uniform boundedness of $C(\bm{H})$ over the image of $\bm{H}(\cdot,\cdot)$ (which can be bounded by a compact set), along with the assumption that the second-stage problem satisfies, for every $\bm{\theta} \in \Theta$ and a.e. $\omega \in \Omega$, the strong second-order sufficient optimality conditions (which require invertibility of the Lagrangian associated with \eqref{eqn: second-stage problem} evaluated at the optimal solution; see \cite[Appendix B]{IEEE_tran_Hashmietal} for additional details),
suffice to verify condition \textbf{(A4)} of Assumption \ref{assumption: two-stage problem}.
\par Note that the requirement that the second-stage problem satisfies the strong second-order sufficient optimality conditions is not particularly restrictive and, to the best of our knowledge, cannot be relaxed any further (indeed, strong second-order sufficient conditions of the second-stage problem enable the application of the \emph{implicit function theorem}, which is required to deduce any meaningful continuity or local differentiability properties of the function $\max_{\bm{W} \in \mathcal{W}} F(\bm{W},\bm{H}(\cdot,\omega))$; see, for example, the discussion in \cite{MathOR:Shapiro}). Additionally, sub-analyticity of the function  $\max_{\bm{W} \in \mathcal{W}} F(\bm{W},\bm{H}(\bm{\theta},\omega)) - F(\bm{W},\bm{H}(\bm{\theta},\omega))$ is immediate under the assumption that the utility function $F(\cdot,\cdot)$ is real-analytic (see \cite[Example 4]{SIAMOpt:Daniilidis}). We note that most utility functions employed in the relevant literature of IRS-aided optimal beamforming over wireless communication systems are indeed real-analytic (e.g., this holds for the popular sumrate utility, the proportional fairness utility as well as the harmonic-rate utility; see \cite{Prop_fairness}). 
\par In this work, we are attempting to theoretically explain the efficient algorithmic behaviour demonstrated in a plethora of experiments in \cite[Section V]{IEEE_tran_Hashmietal}. The experiments in \cite{IEEE_tran_Hashmietal} were conducted using the sumrate utility, which is a real-analytic function. Additionally, a part of the strong second-order sufficient optimality conditions (specifically, strict complementarity) was readily satisfied for the associated second-stage problems. Thus, uniform boundedness of the sub-analytic function $C(\bm{H})$ appearing in \eqref{eqn: Lojasiewicz} sufficed to prove \textbf{(A4)}. Let us then assume that there exists a constant $C_{\bm{H}} \geq C(\bm{H}(\bm{\theta},\omega))$, for any $\bm{\theta} \in \Theta$ and a.e. $\omega \in \Omega$. We now proceed to explain that if \eqref{eqn: Lojasiewicz} holds, then the magnitude error function given in Definition \ref{definition: oracle assumption} can be directly controlled. 
\par Indeed, since the second-stage optimization problem is nonconcave, we can only hope to find an approximate solution in the following sense: we find $\widetilde{\bm{W}}$ such that
\begin{equation} \label{eqn: natural error criterion}
\max_{\bm{W} \in \mathcal{W}} F(\bm{W},\bm{H}(\bm{\theta},\omega)) - F\big(\widetilde{\bm{W}},\bm{H}(\bm{\theta},\omega)\big) \leq \widetilde{\varepsilon}(\bm{\theta},\omega),
\end{equation}
\noindent in which case the magnitude of $\widetilde{\varepsilon}(\bm{\theta},\omega)$ (either in expectation or uniformly) can be controlled to some reasonable extent. However, since \eqref{eqn: Lojasiewicz} holds (under the aforementioned minor assumptions), it follows that
\[ \textnormal{dist}\Big(\widetilde{\bm{W}}, \arg\max_{\bm{W} \in \mathcal{W}} F\left(\bm{W},\bm{H}(\bm{\theta},\omega)\right)\hspace{-2pt}\Big) \leq C_{\bm{H}} \left(\widetilde{\epsilon}(\bm{\theta},\omega)\right)^{\eta},\]
\noindent for all $\bm{\theta} \in \Theta$ and a.e. $\omega \in \Omega$. In other words, there exist positive constants $C_{\bm{H}}$ and $\eta$, such that for any $\bm{\theta} \in \Theta$ and a.e. $\omega \in \Omega$, we can find an optimal solution $\bm{W}^* \in \arg\max_{\bm{W} \in \mathcal{W}} F(\bm{W},\bm{H}(\bm{\theta},\omega))$, satisfying
\[ \big\|\widetilde{\bm{W}} - \bm{W}^* \big\| \leq C_{\bm{H}} \widetilde{\varepsilon}(\bm{\theta},\omega)^{\eta}. \]
\noindent Hence, the error criterion appearing in Definition \ref{definition: oracle assumption} is in fact satisfied for some random error function $\varepsilon(\bm{\theta},\omega)$ that depends directly on the natural error criterion given in \eqref{eqn: natural error criterion}. This verifies the numerical evidence demonstrated in \cite[Section V]{IEEE_tran_Hashmietal} (and not supported in principle by the corresponding developed theory in \cite{IEEE_tran_Hashmietal}), essentially showcasing the efficiency of \ref{Algorithm: iZoSGA} developed herein, despite obtaining only approximate solutions to the second-stage problem \eqref{eqn: second-stage problem} via utilizing the well-known WMMSE algorithm \cite{wmmseShi2011}.
\par To close this section, we demonstrate the convergence rate of Algorithm \ref{Algorithm: iZoSGA}, under the assumptions specified above. As above, let us assume that there exist some uniform constants $C_{\bm{H}}$ and $\eta$, using which \eqref{eqn: Lojasiewicz} holds for any $\bm{\theta} \in \Theta$ and a.e. $\omega \in \Omega$. Let us also assume that the  (natural) error criterion given in \eqref{eqn: natural error criterion} holds, such that $\mathbb{E}\{\widetilde{\varepsilon}(\bm{\theta},\omega)^{\eta}\} < +\infty$, for all $\bm{\theta} \in \Theta$. Then, choosing 
$\mu = \mathcal{O}\big(1/\sqrt{(M T)}\big)$ yields that  
$$\scalemath{0.96}{\mathbb{E}\big\{\big\Vert \nabla \varphi^{1/(2\rho)}\big(\bm{\theta}_{t^*}\big)\big\Vert
     \big\} \leq \epsilon + \mathcal{O}\left(\sqrt{\frac{1}{T+1}\sum_{t=0}^T\mathbb{E}\left\{\widetilde{\varepsilon}(\bm{\theta}_t,\omega_t)^{\eta}\right\}}\right)},$$ 
after $\mathcal{O}\left(\sqrt{S}\epsilon^{-4}\right)$ iterations, where $\{(\bm{\theta}_t,\omega_t)\}_{t=0}^T$ are generated by \ref{Algorithm: iZoSGA}. 
\par This rate indicates that the propagation of the errors occurring by solving the second-stage problem inexactly (by following the natural criterion \eqref{eqn: natural error criterion}) potentially depend on the uniform \L{}ojasiewicz exponent $\eta$. 
Note that the existence of such a uniform exponent is guaranteed by the mere assumption that the employed utility function $F(\cdot,\cdot)$ is real-analytic (see \cite[Example 4]{SIAMOpt:Daniilidis} and \cite[Theorem 2.3]{denkowska2018upc}). 
The larger the exponent constant is, the better the error propagation. Although we do not have the ability to obtain estimates on the value of $\eta$, we observed a mild error propagation in the numerical experiments presented in \cite[Section V]{IEEE_tran_Hashmietal}. To further illustrate this, in the following section we apply \ref{Algorithm: iZoSGA} in two general, large-scale passive IRS-aided sumrate maximization settings, 
to witness the potential drop in performance relative to the degradation of the second-stage approximate solutions (i.e., as the quality of our inexact oracle decreases).

\section{Numerical Results} \label{sec: numerical results}
\begin{figure}
  \centering
  \centerline{\includegraphics[width=3.2in]{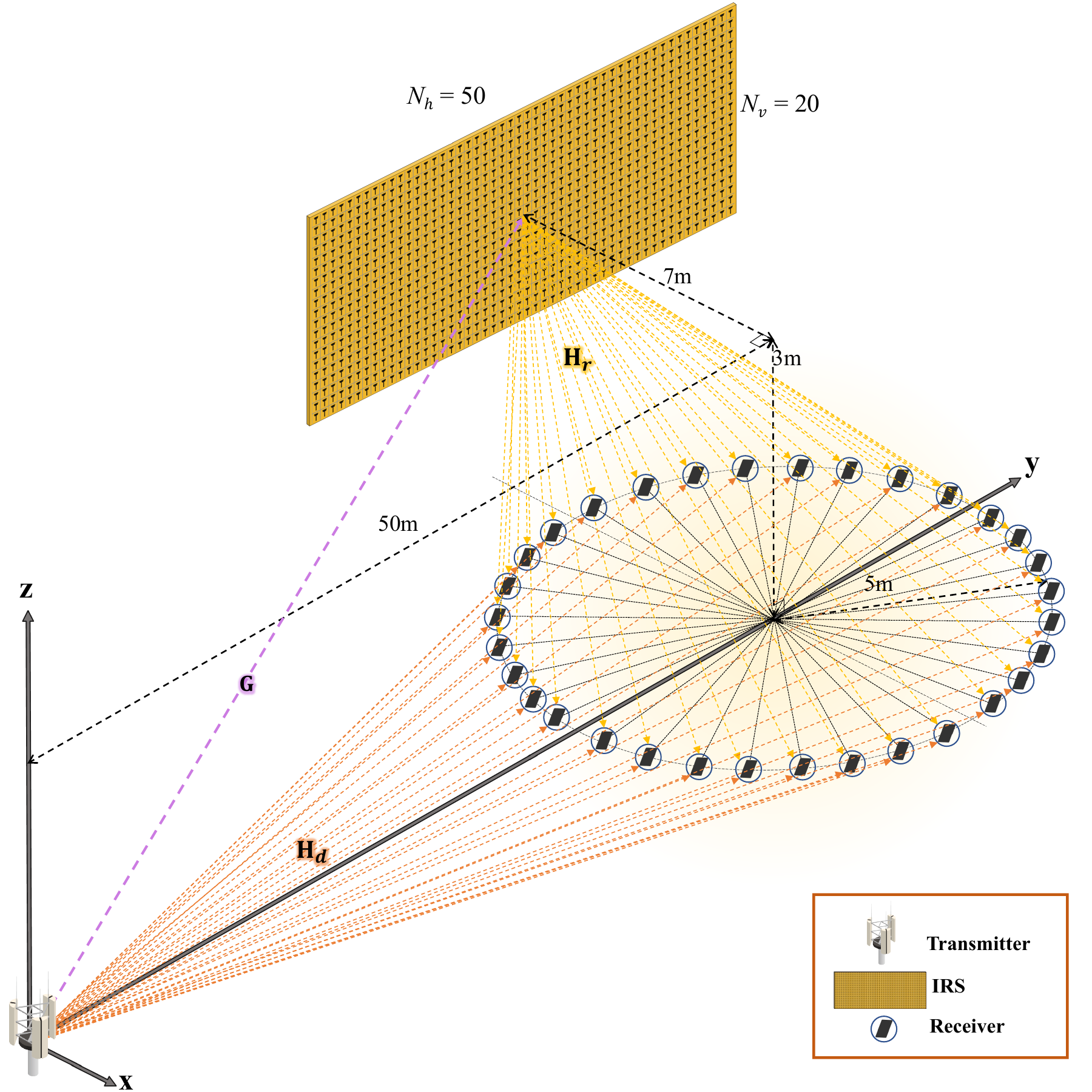}}

\caption{1000 element IRS-aided network configuration.}
\label{fig:env_setup}
\vspace{-12pt}
\end{figure}


\par We assume a standard MISO downlink scenario as in Fig. \ref{fig:env_setup}, and consider a weighted sumrate utility function reading
\begin{equation}\nonumber
\scalemath{0.97}{{F}_S(\bm{W},\bm{H}(\bm{\theta},\omega)) \triangleq \sum_{k=1}^K \alpha_k \log_2\left(1 + \textnormal{SINR}_k\left(\bm{W},\bm{h}_k\left(\bm{\theta},\omega  \right)\right)\right)},
\end{equation}
where $\alpha_{k},\ \forall\ k \in \mathbb{N}^+_{K}$  are weighting scalars for each receiver $k$, and the term  $\textnormal{SINR}_k\left(\bm{W},\bm{h}_k\left(\bm{\theta},\omega  \right)\right)$ denotes the per-receiver signal-to-noise and interference ratio given by 
\begin{align}\nonumber
\begin{aligned} \label{eq:2}
\text{SINR}_k(\bm{W},\bm{h}_k(\bm{\theta}, \omega )) \triangleq \frac{\left|\bm{h}_k^\hermtr(\bm{\theta}, \omega)\bm{w}_k\right|^2}{{\sum}_{{j \in \mathbb{N}_{K}^+ \setminus k}}\left|\bm{h}^\hermtr_k(\bm{\theta}, \omega)\bm{w}_j\right|^2 + \sigma^2_k}.
\end{aligned}
\end{align}
In both expressions above, we denote the precoding vectors by $\bm{W} {=} \mathrm{vec}(\begin{bmatrix} \bm{w}_1 & \bm{w}_2 & \cdots & \bm{w}_K \end{bmatrix})\ {\in}\ \mathbb{C}^{M_U\triangleq (M \times K)}$, and the channel matrix by $\bm{H}(\bm{\theta},\omega) \triangleq \mathrm{vec}(\begin{bmatrix} \bm{h}_1 \ldots \bm{h}_K \end{bmatrix})\ {\in}\ \mathbb{C}^{M_U}$, respectively. The positive integer $M$ denotes the number of antennas at the transmitter and $K$ is the number of receivers. Finally, $\sigma^2_{k}$ is the receiver-specific noise variance.
\par We consider a power budget of $P>0$ at the transmitter which renders the feasible set $\mathcal{W}$ to that of all precoders $\bm{W}$ such that $\|\bm{W}\|^2 \leq P$. Thus, \eqref{eqn: two-stage problem} in this case takes the form  
\begin{equation} \label{eqn: two-stage sumrate}
    \max_{\boldsymbol{\theta} \in \Theta}  \mathbb{E}\left\{\max_{\bm{W}: \|\bm{W}\|^2 \leq P}  {F}_S(\bm{W},\bm{H}(\bm{\theta},\omega)) \right\}.
\end{equation}
Given this specific problem, we now proceed to specify a channel model for $\bm{H}(\bm{\theta},\omega)$, and also characterize the involved randomness (denoted by $\omega$). We assume a transmitter with $6$ antennas (i.e., $M = 6$), $32$ receivers (i.e., $K=32$), and an IRS with $1000$ phase-shifting elements (denoted by $\bm{\theta} \in \mathbb{R}^S$, with $S = 1000$). As seen in Fig. \ref{fig:env_setup}, the receivers are associated with two types of links connecting each of them to the transmitter, namely the direct LoS link $\bm{h}_{d,k}$ and the indirect non-LoS link $\bm{h}_{r,k}$ directed by the IRS towards the receivers. The link connecting the IRS to the transmitter is denoted by the channel matrix $\bm{G}$. It should be made explicit here that the randomness can be characterized by $\omega=\{\bm{G}, \bm{h}_{r,k},\bm{h}_{d,k}, k\in \mathbb{N}_K^+\}$, also noting that in Fig. \ref{fig:env_setup}, $\bm{H}_d$ and $\bm{H}_r$ denote $\{\bm{h}_{d,k}, k\in \mathbb{N}_K^+\}$ and $\{\bm{h}_{r,k}, k\in \mathbb{N}_K^+\}$, respectively. 
\par In the following experiments, these intermediate channels are modeled as Rician fading along with certain path loss gains. The exact specifications of the channel model, such as spatial correlations, statistical and instantaneous channel components along with their respective Rician factors, and path loss gains, have been reused from our previous work \cite[Section V-A]{IEEE_tran_Hashmietal} for consistency. Given the particular network topology, the channel link vector $\bm{h}_k(\bm{\theta},\omega)$ from the transmitter to the receiver $k$ may be espressed as 
\begin{align*}
    \bm{h}_k(\bm{\theta}, \omega)
    \triangleq
    \underbrace{\bm{G}^\hermtr \text{Diag}(\bm{A} \circ e^{j{\bm{\phi}}}) \bm{h}_{r,k}}_{\text{non-LOS link}} 
    + 
    \underbrace{\bm{h}_{d,k} }_{\text{LOS link}} ,
    \end{align*}
where the IRS parameters $\bm{\theta}$ are represented by amplitude and phase vectors $\bm{A} \in [0,1]^{S}$ and $\bm{\phi} \in [-2\pi,2\pi]^{S}$, respectively \cite{ch_model:wu2019intelligent}, and where $S=1000$ as mentioned earlier. The experiments that follow assume that the receivers have low mobility and thus the statistical channel state information (S-CSI) of the intermediate channels in $\omega$ has been assumed to be fixed (though unknown) throughout the service time of the transmitter. Also, in all subsequent experiments, we employ the WMMSE algorithm to perform short-term beamforming. Therefore, the oracle error $\varepsilon(\bm{\theta},\omega)$ (given in Definition \ref{definition: oracle assumption}) can be controlled solely by the number of WMMSE iterations, providing a clear interpretation of the results. Lastly, unless otherwise stated, every experimental result is averaged over 60 independent simulations.

\vspace{-4pt}
\subsection{iZoSGA  with fixed WMMSE iterations} \label{subsec: fixed WMMSE iterations experiments}
\par In the first set of experiments we run the WMMSE algorithm for different number of iterations, as shown in Fig. \ref{fig:1}, while also fine-tuning the IRS parameters with iZoSGA using the resulting precoders having varying levels of \textit{inexactness}, noting that the precoder (i.e., oracle) errors depend on the number of WMMSE iterations used. As shown in Fig. \ref{fig:1}, the beamformers optimized with 10, 20 and 50 WMMSE iterations attain similar performance. Indeed, we observe that after 10 WMMSE iterations, the reduction in the magnitude of the oracle error $\varepsilon(\bm{\theta},\omega)$ is not significant, resulting in negligible performance gains. On the other hand, we see that iterating WMMSE for 1, 2 or 3 iterations and optimizing iZoSGA with the corresponding precoders results in noticeably worse performance (albeit with significant gains as compared with the random IRS parameter setting). In these cases, we observe that the resulting oracle error magnitudes, $\varepsilon(\bm{\theta},\omega)$, have a substantial effect in the final QoS; this is consistent with the conclusions presented in the theoretical results of Section \ref{sec: conv anal}. 

\begin{figure}[t]
\centering
\centerline{\includegraphics[width=\linewidth]{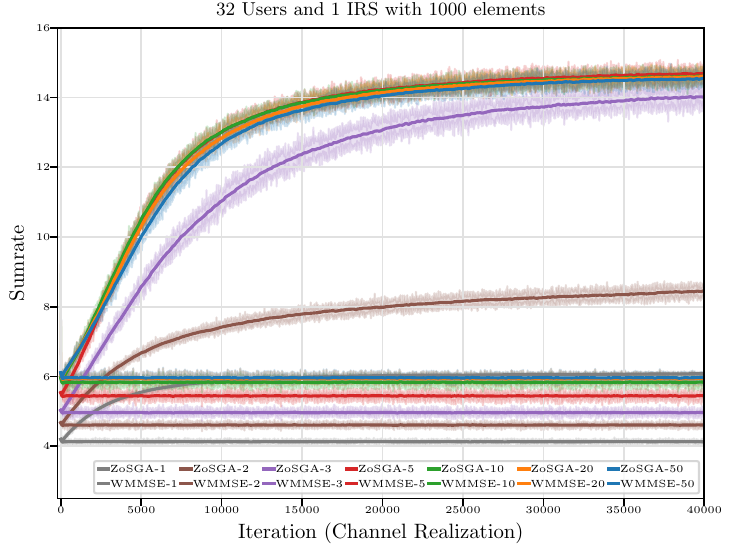}}
\vspace{-6bp}
\caption{Average sumrates achieved by WMMSE \cite{wmmseShi2011} (random IRS phase-shifts) and iZoSGA, with a 1000 phase-shifter IRS using 1,2,3,5,10,20 and 50 WMMSE iterations.}
\label{fig:1}
\vspace{-15pt}
\end{figure}

\par A more interesting behavior is observed when the WMMSE algorithm is run for 5 iterations. There is a discernible performance gap between WMMSE with 5 and 10 iterations when the IRS parameters are tuned randomly. However, by employing iZoSGA to fine-tune the IRS parameters and fixing the WMMSE iterations merely to 5, we were able to obtain comparable performance with the algorithm that enabled 50 WMMSE iterations. Two relevant observations are in order. Firstly, within the first 2500 (outer) iterations, iZoSGA fine-tunes the IRS parameters to match the performance of iZoSGA with 50 (or, in general, more than 10) WMMSE iterations. It seems that the zeroth-order sample gradients, although more erratic, still provide enough information to optimize the IRS parameters sufficiently, so as to counteract larger precorder errors. Thus, the IRS parameters in a sense \textit{fill in the gaps} and we see that the learning curves reach competitive performance levels, as long as WMMSE is allowed to run for at least 5 iterations. Secondly, we observe that after reaching the level of the learning curves pertaining to higher WMMSE iterations, the learning curve for the 5-iteration case follows almost the same trajectory, thus establishing that iZoSGA with 5-iteration WMMSE oracle was able to quickly match the steeper climb of the methods that use higher-quality oracles observed during initial iZoSGA iterations. At this point we should note, in light of our previous work \cite[Section V]{IEEE_tran_Hashmietal}, that the performance gains achieved by the iZoSGA algorithm are state-of-the-art for the IRS-aided wireless network instances at hand.

\par From the experiments above, we conclude that by controlling the error $\varepsilon(\bm{\theta},\omega)$ below a certain threshold ---in this network setting, this was achieved by running WMMSE for 5 iterations--- iZoSGA can ``optimally" tune the IRS parameters to achieve competitive performance, irrespectively of the level of inexactness beyond this threshold point. This is a useful empirical observation that could further impact algorithmic design in practical settings so as to reduce energy consumption.

\vspace{-5pt}
\subsection{iZoSGA with varying WMMSE iterations} \label{subsec: varying WMMSE iterations}

\par To evaluate the effects of substantial changes in the orcale error $\varepsilon(\bm{\theta},\omega)$ during the learning process on the convergence of iZoSGA, we perform two experiments in which we change the WMMSE iterations after every 8000 outer iterations, as shown in Fig. \ref{fig:2}. We start the first experiment with 20 WMMSE iterations, which are subsequantly decreased to 10, 7, 6 and 5 at the 8000, 16000, 24000 and 32000 outer iteration marks, respectively. As is visible in Fig. \ref{fig:2}, the orange curve converged to the best achievable weighted sumrate level without any abrupt changes, verifying that as long as the error $\varepsilon(\bm{\theta},\omega)$ is below a certain threshold, say $\varepsilon'$, employing the iZoSGA algorithm leads to optimal performance, verifying further our observations from Section \ref{sec: numerical results}-\ref{subsec: fixed WMMSE iterations experiments}. In other words, if we were to define an error bound stopping criterion for the WMMSE algorithm (as proposed in the original work \cite{wmmseShi2011}), we would still be able to obtain consistent results as long as the error bound is less than some appropriate threshold $\varepsilon'$.

\begin{figure}[t]
\centering
\centerline{\includegraphics[width=\linewidth]{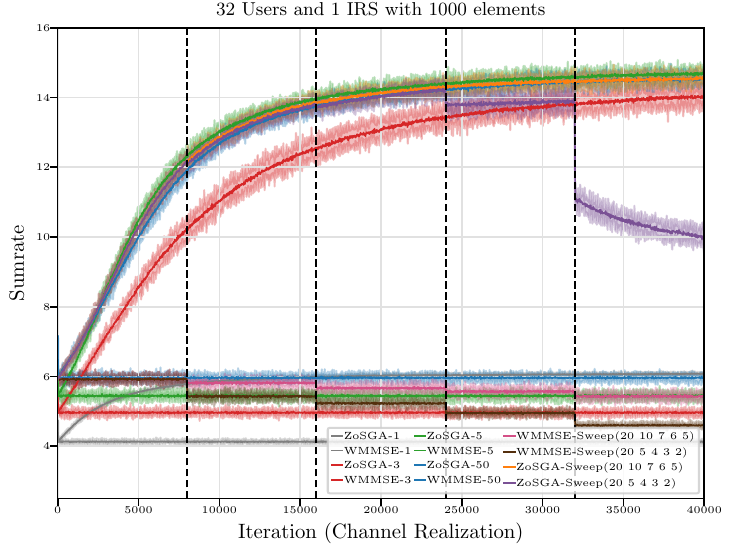}}
\vspace{-6bp}
\caption{Average sumrates achieved by WMMSE \cite{wmmseShi2011} (random IRS phase-shifts) and two separate iZoSGA experiments with WMMSE iterations decreasing after every $8000$ iterations ($20 {\to} 10 {\to} 7 {\to} 6 {\to} 5$ and $20 {\to} 5 {\to} 4 {\to} 3 {\to} 2$).}
\label{fig:2}
\vspace{-12pt}
\end{figure}

\par Contrary to the above, if at any point it happens that $\varepsilon(\bm{\theta},\omega) > \varepsilon'$, then we will observe a performance degradation of iZoSGA. The second experiment confirms this hypothesis. Once again, we start by employing WMMSE for 20 iterations but now we decrease the iterations to 5, 4, 3 and 2 as iZoSGA progresses. As long as WMMSE is run for at least 5 iterations, i.e. up until 16000 (outer) iZoSGA iterations, the purple curve is consistent with the orange one. However, when the WMMSE iterations are decreased to 4, we see a small drop in the weighted sumrates indicating that the error has now become greater than the required threshold $\varepsilon'$. At 3 iterations, we observe a noticeable decrease in the weighted sumrates, at which point the curve flattens out, indicating that iZoSGA is not able to fine tune IRS parameters any further with the corresponding level of inexactness. Finally, when WMMSE iterations are decreased to 2, we not only observe that the weighted sumrates drops significantly, but also that the learning curve descends. This behavior indicates that the increase in the oracle errors not only decreases the level of performance, but also \textit{corrupts} learning by overwriting the already tuned IRS parameters with more inexact gradient steps. 
\par Once again, we empirically verify that as long as oracle errors remain above a certain threshold (irrespectively of how far from this threshold these errors are), the performance of iZoSGA is not hindered. At the same time, iZoSGA is adaptive to fluctuations in $\varepsilon(\bm{\theta},\omega)$; this is already suggested from our theoretical error bound, which is \emph{ergodic} in nature.

\begin{figure}[t]
\centering
\centerline{\includegraphics[width=\linewidth]{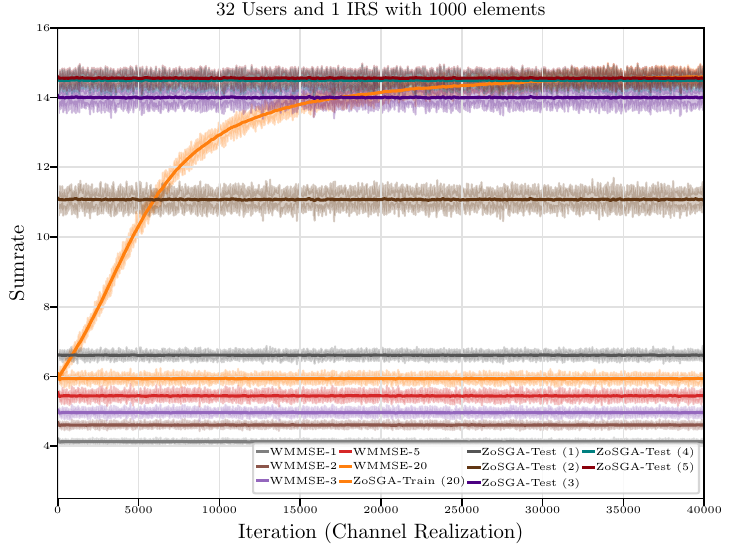}}
\vspace{-6bp}
\caption{Average sumrates achieved by WMMSE \cite{wmmseShi2011} (at 1,2,3,4 and 5 iterations) with iZoSGA fine-tuned 1000 IRS phase-shifters.}
\label{fig:3}
\vspace{-13pt}
\end{figure}

\vspace{-4pt}
\subsection{Tuned IRS tested with inexact WMMSE precodings}
\par As shown in Fig. \ref{fig:2}, when IRS parameters are tuned (using iZoSGA) with a high-quality \textit{initial} inexact oracle (i.e., in our case by using a large number of WMMSE iterations), decreasing the oracle quality considerably at later iterations (up to a certain level) does not degrade the achieved performance. This motivated us to perform a set of experiments where the IRS was first tuned (i.e., \emph{trained}) with a sufficiently large number of WMMSE iterations, and then subsequently \emph{tested} (or deployed) using a lower-quality oracle (i.e., less WMMSE iterations, and thus computationally cheaper short-term precoders) and the \emph{constant} but tuned (or learnt) IRS parameters, to observe the overall performance of the algorithm.  
\par We first trained the IRS by enabling 20 WMMSE iterations. After the IRS parameter tuning is complete, which in essence results in a new network setting (via channel shaping), we run the WMMSE algorithm with 1, 2, 3, 4 and 5 iterations, respectively, as shown in Fig. \ref{fig:3}. We observe that using just 4 or 5 WMMSE iterations, the resulting beamforming performance is on par with the 20-WMMSE iterations regime (with active IRS tuning). In other words, we confirm that one can first train iZoSGA with more WMMSE iterations (i.e., using a high-quality oracle) and can then deploy the fine tuned IRS with a more realistic compute overhead of just 4 WMMSE iterations, without any discernible performance degradation. 

\par Some interesting observations can also be drawn from the 1-, 2- and 3-WMMSE iteration(s) settings. In the 3-WMMSE iterations case, we can see that the respective learning curve in Fig. \ref{fig:1} achieves the same performance as that in Fig. \ref{fig:3} (i.e., trained and then deployed using 3 WMMSE iterations). In the 1- and 2-WMMSE iteration(s) cases, we get something somewhat unexpected: The weighted sumrate levels corresponding to the said iterations in Fig. \ref{fig:3} are noticeably higher than those in Fig. \ref{fig:1}. This confirms the effectiveness of our \textit{train expensive, deploy cheap} approach.

\vspace{-4pt}
\subsection{iZoSGA with Physical IRS}
\par Lastly, we extended our modeling assumptions to a practically feasible tranmission line (TL) equivalent of an electromagnetic (EM) IRS model as proposed in \cite{phy_em_model:costa2021electromagnetic}, based on varactor diodes. This TL model incorporates the various geometrical and electrical properties of the IRS elements, such as changes to the responses due to different angles of signal wave incidence, mutual coupling among closely spaced elements, reflection losses, etc. For a more detailed description of the TL model, see \cite[Section V-C]{IEEE_tran_Hashmietal}. Suffice to say that each element $\theta(C_{var})$ of the IRS parameter vector $\bm{\theta}$ is now a function of the varactor diode capacitance $C_{var}$. By assuming that the Floquet theorem holds \cite{phy_em_model:costa2021electromagnetic}, this function relation is the same for every element's capacitance. Thus, we may propagate the above relation for all IRS elements, say $S$ in number, and define a vector function $\bm{\theta}(\cdot)$ of the varactor capacitances $\bm{c}_{var} =  [C^{1}_{var} \, C^2_{var} \, C^{3}_{var} \, \cdots \, C^S_{var}]^\top$ collectively expressed as $\bm{\theta}(\bm{c}_{var}) = [ \theta(C^{1}_{var}) \, \theta(C^2_{var}) \, \theta(C^{3}_{var}) \, \cdots \, \theta(C^S_{var})]^\top$.  Thus, the channel link vector $\bm{h}_k(\bm{\theta},\omega)$ from the transmitter to the receiver $k$ can now be rewritten as
\begin{align*}
    \bm{h}_k(\bm{\theta}, \omega)
    \triangleq
    \bm{G}^\hermtr \text{Diag}(\bm{\theta}(\bm{c}_{var})) \bm{h}_{r,k} 
    + \bm{h}_{d,k}  ,
    \end{align*}
where, instead of tuning phases and amplitudes directly, we fine-tune the capacitance vector $\bm{c}_{var}$. 

\begin{figure}[t]
\centering
\centerline{\includegraphics[width=\linewidth]{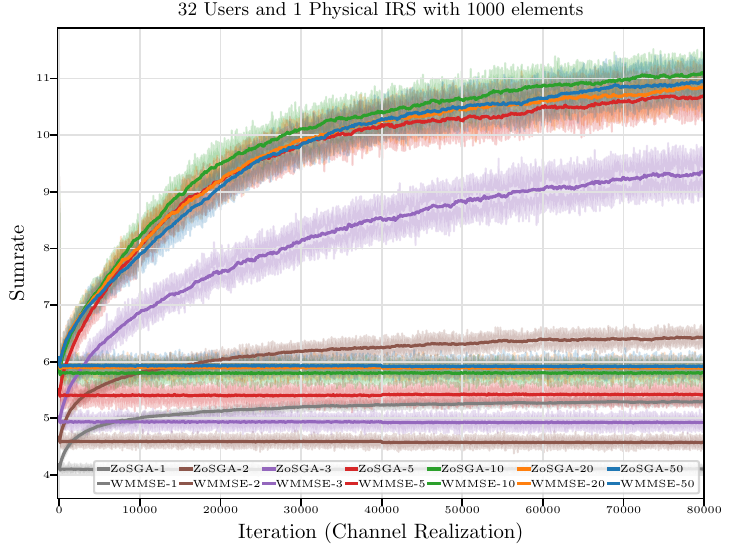}}
\vspace{-6bp}
\caption{Average sumrates achieved by WMMSE \cite{wmmseShi2011} (random IRS phase-shifts) and iZoSGA with a physical EM IRS
model with 1000 phase-shifters using 1,2,3,5,10,20 and 50 WMMSE iterations.}
\label{fig:4}
\vspace{-13pt}
\end{figure}

\par By attempting to replicate the first set of experiments (as reported in Fig. \ref{fig:1}), we conduct a series of runs of the algorithm with different (constant) WMMSE iterations, as shown in Fig. \ref{fig:4}.  We first observe that the learning curves have higher variances compared to those shown in Fig. \ref{fig:1} (over a set of 60 independent simulations). This is because the amplitude/phase response of each varactor diode is constrained relative to its capacitance values, which is not the case for the ideal IRS set-up \cite[Fig. 7]{phy_em_model:costa2021electromagnetic}. Nonetheless, the trend that we observed in the previous subsections is consistent with the one shown in Fig. \ref{fig:4}, i.e., in the 1-, 2- and 3-WMMSE iteration regimes, iZoSGA achieves visibly worse performance levels when compared with the 5-, 10-, 20- and 50-WMMSE iteration regimes. Moreover, due to an increase in the problem complexity, all the learning curves require more iterations to converge (i.e., to flatten).  
It is also important to observe in this case the degradation in the overall performance, compared with the performance reported for the idealized IRS setting of Sections \ref{sec: numerical results}-\ref{subsec: fixed WMMSE iterations experiments}, which is to be expected as the varactor structure couples the phase and amplitude responses, effectively restricting the parameter space $\Theta$ available to iZoSGA.

\vspace{-4pt}
\section{Conclusions} \label{sec: Conclusions}
In this paper, we developed a fully data-driven and model-free zeroth-order algorithm suitable for the solution of two-stage stochastic problems arising in passive IRS-assisted beamforming over wireless communication networks. The proposed method utilizes inexact oracles for short-term precoding, and zeroth-order sample-gradient steps for long-term tuning of the IRS parameters. We have shown, under very general and realistic assumptions, that iZoSGA converges close to a stationary point of the underlying (nonconvex) two-stage stochastic program, where the proximity to this stationary point is directly dependent on the magnitude of the error of the associated inexact oracle. By specializing our analysis to a wide range of passive IRS-assisted optimal beamforming problems, we showcase that the inexact oracle error can be controlled to a reasonable extent. In turn, we obtain valuable insights that enable the design of practical and highly effective algorithmic schemes able to solve realistic (large-scale) instances, without structural assumptions on channel models, network topology or IRS configurations, and with minimal effort requirements.

The effectiveness and reliability of the proposed approach were numerically demonstrated on a wide range of meaningful settings, using both idealized and (realistic) physical IRS models, assuming a network topology with a large number of users. The state-of-the-art developments in this paper could pave the way for the seamless utilization of oracle-based zeroth-order algorithms, such as iZoSGA, in the context of IRS-aided optimal beamforming, by bridging the gap between theory and practice while also providing insights for appropriate design and deployment of such systems. 
\bibliographystyle{IEEEbib-abbrev}
\bibliography{iZoSGA_references.bib}

\end{document}